\numberwithin{equation}{section}
\numberwithin{figure}{section}
\theoremstyle{plain}
\newtheorem{thm}{\protect\theoremname}
  \theoremstyle{definition}
  \newtheorem{defn}[thm]{\protect\definitionname}
  \theoremstyle{plain}
  \newtheorem{prop}[thm]{\protect\propositionname}
  \theoremstyle{plain}
  \newtheorem{lem}[thm]{\protect\lemmaname}
  \theoremstyle{remark}
  \newtheorem{rem}[thm]{\protect\remarkname}
  \theoremstyle{definition}
  \newtheorem{example}[thm]{\protect\examplename}
\def\R{{\mathbb R}}
\def\V{{\mathbb R}^d}
\def\Rep{{\bf Rep}}
\def\pol{{\bf Pol}}
\def\id{{\bf id}}
\def\Diff{{\bf Diff}}
\def\DiffB{\Diff_b}
\def\DiffBV{\Diff_{b,v}}
\def\bR{{\mathbb R}}
\def\bC{{\mathbb C}}
\def\bN{{\mathbb N}}
\def\spm{\star}
\def\Op{\mathrm{Op}}
\def\op{\mathrm{Op}_1}
\def\fio{\mathrm{FIO}}
\def\bp{\overline{\xi}}
\def\bq{\overline{x}}
\def\tp{\tilde{\alpha}}
\def\tq{\tilde{x}}
\def\tp{\tilde{\xi}}
\def\m{\frac{d\overline{\xi}\:d\overline{x}}{(2\pi\hbar)^d}}
\def\n{\frac{d\tilde{\xi}\:d\tilde{x}}{(2\pi\hbar)^d}}
\def\tp{\tilde{\xi}}
\def\tq{\tilde{x}}
  \providecommand{\definitionname}{Definition}
  \providecommand{\examplename}{Example}
  \providecommand{\lemmaname}{Lemma}
  \providecommand{\propositionname}{Proposition}
  \providecommand{\remarkname}{Remark}
\providecommand{\theoremname}{Theorem}
\begin{document}

\selectlanguage{english}

\title{quantization of (volume-preserving) actions on $\bR^{d}$}

\begin{abstract}
We associate a space of (formal) representations on $C^{\infty}(\R^{d})[[\hbar]]$
(which we call quantizations) with an action of a group on $\R^{d}$
by smooth diffeomorphisms. If the action is further volume preserving,
these quantizations can be realized as unitary representations on
$L^{2}(\R^{d})$ by bounded $\hbar$-dependent Fourier integral operators,
the formal case corresponding to the asymptotics in the limit $\hbar\rightarrow0$.
We construct $DGA$s controlling these quantizations and prove existence
and rigidity results for them.
\end{abstract}

\author{Benoit Dherin and Igor Mencattini}

\address{Benoit Dherin and Igor Mencattini, ICMC-USP Universidade de Sao Paulo,
Avenida Trabalhador Sao-carlense 400 Centro, CEP: 13566-590, Sao Carlos,
SP, Brazil}

\maketitle
\tableofcontents{}

\section{Introduction}

The question of quantizing a smooth action $\varphi$ of a Lie group
$G$ on a manifold $M$ has received different (although related)
answers depending on the particular structures at hand on the manifold,
the type of Lie groups acting, the type of actions, and the quantization
theory used. 

When the manifold is symplectic, and the action is hamiltonian and
admits a momentum map, both geometric quantization theory (see \cite{GS,Ko}
for instance) and deformation quantization theory (see \cite{A2,B1,BFFLS,Xu,Rieffel2}
for instance) have their own notion of quantization. On the other
hand, Rieffel in \cite{Rieffel}, using ideas of deformation quantization,
introduced a notion of action quantization that supposes no symplectic
structure on the manifold to begin with if the group acting is $\bR^{d}$.
This program has been extended to various other groups and cases (see
\cite{B1,B2,LW}).

In this paper, we propose a quantization scheme for a general action
$\varphi$ of a group $G$ (not necessarily a Lie group) on $\bR^{d}$
by smooth diffeomorphisms. More precisely, we associate to such an
action a space $\Rep_{\varphi}(G)$ of representations (which we call
\textit{quantizations}) by certain formal operators on the space $C^{\infty}(\R^{d})[[\hbar]]$
of formal power series in $\hbar$ with coefficients in the smooth
functions on $\R^{d}$. In particular, the \textit{trivial quantization}\textbf{,
}obtained by the pullback of functions, $T_{g}\psi(x)=\psi(\varphi_{g}^{-1}(x))$
with $g\in G$, is always in $\Rep_{\varphi}(G)$, and the other representations
in $\Rep_{\varphi}(G)$ can be seen as {}``deformations'' of this
trivial quantization. 

The main result of this paper (Theorem \ref{thm:ExistenceAndRigidity})
gives cohomological obstructions to the existence of such {}``deformations''
as well as information with regards to their rigidity (i.e. when all
the quantizations in $\Rep_{\varphi}(G)$ are equivalent to the trivial
one). The main ingredient to prove these existence and rigidity results
is a Differential Graded Algebra (DGA) whose Maurer-Cartan elements
are in one-to-one correspondence with the quantizations of the action. 

When the action $\varphi$ is further volume preserving and bounded
(which means that $|\varphi_{g}'(x)|=1$ for all $g\in G$ and $x\in\bR^{d}$
with the additional condition that $\varphi_{g}$ and all of its derivatives
are bounded for all $g\in G$), $\Rep_{\varphi}(G)$ can be realized
as a space of unitary representations on $L^{2}(\R^{d})$ by certain
bounded Fourier Integral Operators, or FIOs for short (see \cite{D,EZ,Ho,M}
for general references), which depend on a parameter $\hbar$. In
this non-formal setting, there is also a DGA controlling quantization. 

Actually, the formal quantizations associated with an action by smooth
diffeomorphisms are constructed by taking the asymptotic expansion
in the limit $\hbar\rightarrow0$ of the FIOs used in the volume preserving
case and forgetting that these expansions come from honest bounded
operators. What results is a set of formal operators of infinite order,
which may not be {}``resummable'' if the action we start with is
not bounded. 

We also explain how geometric quantization (Example \ref{sub:Geometric-quantization-intermezz})
and deformation quantization (Section \ref{sub:Formal-G-systems})
are related to our quantization scheme for actions.

This paper is organized as follows:

In Section \ref{sec:Preliminaries}, we introduce the class of $\hbar$-dependent
FIOs we use to quantize volume-preserving actions. These operators
are of the form
\[
\Op(a,\varphi)\psi(x)=\int\psi(\overline{x})a(x,\overline{\xi})e^{\frac{i}{\hbar}\langle\overline{\xi},\varphi^{-1}(x)-\overline{x}\rangle}\frac{d\overline{\xi}d\overline{x}}{(2\pi\hbar)^{d}},
\]
where $\varphi$ is a (bounded) diffeomorphism of $\R^{d}$. We give
results on the continuity of these operators as well as their asymptotics
in the limit $\hbar\rightarrow0$, which we interpret as formal operators
of infinite order.

In Section \ref{sec:Quantization-of--actions}, we introduce the space
$\Rep_{\varphi}(G)$ of quantizations associated with an action together
with their corresponding $G$-\textit{systems.} The starting point
is the observation that the trivial quantization can be rewritten
in terms of the FIOs of the previous section as follows:
\[
T_{g}\psi=\Op(1,\varphi_{g})\psi.
\]
When the action is bounded, a $G$-system is a system $\{a_{g}(x,\xi)\}_{g\in G}$
of amplitudes such that the operators
\begin{equation}
T_{g}^{a}\psi=\Op(a_{g},\varphi_{g})\psi\label{eq:deformation}
\end{equation}
form a representation of $G$ on $L^{2}(\R^{d})$ by bounded operators.
We explain that the asymptotic expansion of these quantizations yields
a notion of formal $G$-systems and formal quantizations that can
be used when the action is no longer bounded. When the action is further
volume preserving, we can require the $G$-system to be so that the
corresponding representations are unitary. There are a number of examples
of this in the literature, but, mostly, when the amplitudes of the
$G$-system do not depend on $\xi$. Because of this, we conclude
this section by a study of these special $G$-systems, yielding to
Theorem \ref{l27}, which is an analog of our main result for formal
$G$-systems (Theorem \ref{thm:ExistenceAndRigidity}) in this special
case. 

In Section \ref{sec:The-complex-of}, we construct two DGAs controlling,
respectively, $G$-systems and their formal versions. We show that
Maurer-Cartan elements are in one-to-one correspondence with $G$-systems
(both in the formal and non-formal case) and that gauge equivalent
Maurer-Cartan elements give equivalent quantizations.

In Section \ref{sec:Existence-and-rigidity}, we state and prove the
main theorem of this paper (Theorem \ref{thm:ExistenceAndRigidity}),
which gives cohomological conditions with regards to the existence
and rigidity of formal $G$-systems. We spell out this theorem in
the case the action we start with is trivial, obtaining results (Theorem
\ref{thm:trivial-action}) very close to those of Pinzcon \cite{Pinzcon}
on deformations of representations.

\subsection*{Acknowledgments}

We thank Alberto Cattaneo, Ugo Bruzzo, Giuseppe Dito, Gianni Landi,
Marc Rieffel, Mauro Spreafico, Ali Tahzibi, Alan Weinstein, and Sergio
Zani for useful feedback and for pointing us toward related works,
as well as the hospitality of UC Berkeley and SISSA, where part of
this project was conducted. B.D. acknowledges support from FAPESP
grant 2010/15069-8 and 2010/19365-0 and the University of S\~ao Paulo.

\section{Preliminaries\label{sec:Preliminaries}}

In this section, we review a class of $\hbar$-dependent Fourier integral
operators that we will use in Section \ref{sec:Quantization-of--actions}
for action quantization purposes. We discuss the continuity of these
operators as well as the closeness of their composition. We also give
the asymptotics of these operators in the limit $\hbar\rightarrow0$,
which we will use later on to define a notion of {}``formal quantization''
of actions. Along the way, we review some facts about pseudo-differential
operators.

Throughout this paper, we will consider $\bR^{d}$ with its canonical
coordinates $x=(x_{1},\dots,x_{d})$, and we will identify its cotangent
bundle $T^{*}\bR^{d}$ with $\bR^{2d}=\bR^{d}\times(\bR^{d})^{*}$,
where $(\bR^{d})^{*}$ is the dual to $\R^{d}$ with dual coordinates
$\xi=(\xi_{1},\dots,\xi_{d})$. The paring between $\V$ and $(\V)^{\ast}$,
will be denoted by $\langle\cdot,\cdot\rangle$ so that $\langle x,\xi\rangle=\sum_{i=1}^{d}x_{i}\xi_{i}$.
Also, $\m$ will stand for the Lebesgue measure on $T^{*}\V$. We
will also make use of the multi-index notation: For $\alpha\in\bN^{d}$,
we define
\begin{gather*}
|\alpha|=\alpha_{1}+\cdots+\alpha_{d},\qquad y^{\alpha}=y^{\alpha_{1}}\cdots y^{\alpha_{d}}\quad\textrm{for }y\in\R^{d},\\
\partial_{x}^{\alpha}=\frac{\partial^{|\alpha|}}{\partial x^{\alpha_{1}}\cdots\partial x^{\alpha_{d}}},\qquad\partial_{\xi}^{\alpha}=\frac{\partial^{|\alpha|}}{\partial\xi^{\alpha_{1}}\cdots\partial\xi^{\alpha_{d}}},\\
D_{x}^{\alpha}=\frac{1}{i^{|\alpha|}}\partial_{x}^{\alpha},\qquad D_{\xi}^{\alpha}=\frac{1}{i^{|\alpha|}}\partial_{\xi}^{\alpha}.
\end{gather*}

\subsection{Fourier integral operators}

A \textbf{Fourier Integral Operator} (or FIO) on $\R^{d}$ is an integral
operator, denoted by $\Op(a,S)$, of the form

\begin{equation}
\Op(a,S)\psi(x)=\int\psi(\bq)a(x,\bp)e^{\frac{i}{\hbar}S(\bp,\bq,x)}\frac{d\bp d\bq}{(2\pi\hbar)^{d}}\label{f8}
\end{equation}

from the space $C_{0}^{\infty}(\V)$ of compactly supported smooth
functions on $\R^{d}$ to the space $\mathcal{D}'(\V)$ of distribution
on $\V$, where 

\begin{itemize}

\item $\hbar$ is a fixed real number in the interval $[0,1]$ (later on,
we will be interested in taking the limit $\hbar\rightarrow0$ and
in considering $\hbar$ as a formal parameter in the resulting asymptotic
expansion), 

\item $a$ is a smooth function on $\V\times(\V)^{*}$ called the \textbf{amplitude}
or the \textbf{(total) symbol} of the Fourier integral operator, 

\item $S$ is a smooth function on $(\V)^{*}\times\V\times\V$ called the
\textbf{phase} of the operator. (More generally, one can define the
phase on $\Lambda\times\V\times\V$, where $\Lambda$ is a more general
space of parameters than $(\V)^{*}$; see \cite{D,Ho} for a presentation
of the full theory.)

\end{itemize}

A general problem is to find suitable conditions on both the amplitudes
and the phases so as to obtain a class of FIOs that enjoys the following
nice properties:
\begin{itemize}

\item the operator composition is closed when restricted to this class of
FIOs (which is in general not the case)

\item the operators can be extended to continuous operators on the space
$L^{2}(\V)$ of square integrable functions on $\V$

\end{itemize}
We now present two classes of FIOs that have these good properties.

\subsection{Pseudodifferential operators}

A \textbf{pseudodifferential operator} is a Fourier integral operator
with phase $S(\bar{\xi},\bar{x},x)=\langle\bar{\xi},x-\bar{x}\rangle$.
In other words, it is a integral operator of the form 

\begin{equation}
\big(\Op(a)\psi\big)(x)=\int\psi(\bq)a(x,\bp)e^{{\frac{i}{\hbar}\langle\bp,(x-\bq)\rangle}}\m.\label{f4}
\end{equation}

Following \cite[p. 12]{M}, we define $S_{n}(1)$ to be the set of
\textbf{bounded symbols} (or amplitudes) on $\R^{n}$, that is, the
set of families of smooth functions on $\R^{n}$ parametrized by some
$\hbar\in(0,\hbar_{0}]$ that are uniformly bounded together with
all their derivatives.

Unless necessary, we will not write explicitly the dependence on $\hbar$
(i.e. we will write $a(z)$ instead of $a(z;\hbar)$ for symbols in
$S_{n}(1)$, where $z\in\R^{n}$). 

We will make use of the following result, which is a weaker version
of \cite[Thm. 2.8.1, p. 43]{M}:

\begin{thm}\label{thm:CV} 
If $a\in S_{2d}(1)$, then $\Op(a)$ is a continuous operator on $L^{2}(\V)$. 
\end{thm}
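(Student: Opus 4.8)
The plan is to prove this by the Calder\'on--Vaillancourt method, reducing the $L^2$-boundedness of $\Op(a)$ to a sum of almost orthogonal pieces and invoking the Cotlar--Stein lemma. First I would write the Schwartz kernel as the oscillatory integral $K(x,\bq)=\int a(x,\bp)e^{\frac{i}{\hbar}\langle\bp,x-\bq\rangle}\,\frac{d\bp}{(2\pi\hbar)^{d}}$. The obstruction to applying Schur's test directly is that a symbol $a\in S_{2d}(1)$ carries no decay in the frequency variable $\bp$, so $K$ is only a distribution and $\sup_{x}\int|K(x,\bq)|\,d\bq$ is a priori infinite.

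To circumvent this I would localize in frequency: fix a cutoff $\chi\in C_{0}^{\infty}$ supported in a unit cube with $\sum_{k\in\bZ^{d}}\chi(\bp-k)=1$, set $a_{k}(x,\bp)=a(x,\bp)\chi(\bp-k)$ and $T_{k}=\Op(a_{k})$, so that $\Op(a)=\sum_{k\in\bZ^{d}}T_{k}$. Each $a_{k}$ is now compactly supported in $\bp$, so its kernel $K_{k}$ is an honest integral. Integrating by parts in $\bp$ via the identity $(\hbar D_{\bp})^{\alpha}e^{\frac{i}{\hbar}\langle\bp,x-\bq\rangle}=(x-\bq)^{\alpha}e^{\frac{i}{\hbar}\langle\bp,x-\bq\rangle}$ (with no boundary terms, thanks to the cutoff) yields, for every $N$, the pointwise bound $|K_{k}(x,\bq)|\le C_{N}\,\hbar^{-d}\big(1+|x-\bq|/\hbar\big)^{-2N}$, where $C_{N}$ depends only on finitely many sup-norms of the $\bp$-derivatives of $a$. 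Schur's test (after the substitution $u=(x-\bq)/\hbar$, which requires $2N>d$) then gives $\|T_{k}\|_{L^{2}\to L^{2}}\le C$ uniformly in $k$ and $\hbar$. Summing these bounds over $k\in\bZ^{d}$ diverges, so uniform boundedness of the individual pieces is not enough.

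The decisive step is almost orthogonality: I would show $\|T_{j}^{*}T_{k}\|+\|T_{j}T_{k}^{*}\|\le C_{N}(1+|j-k|)^{-2N}$ for every $N$. Here one composes the two operators, carries out the intermediate spatial integration, and exploits that the frequency supports of $a_{j}$ and $a_{k}$ are separated by $|j-k|$; a non-stationary phase argument (integration by parts in the spatial variable of the composition) produces the gain $(1+|j-k|)^{-2N}$, this time at the cost of finitely many $x$-derivatives of $a$ as well. This is precisely where both families of derivatives enter, consistent with $a$ being a symbol on $\V\times(\V)^{*}\cong\R^{2d}$. Choosing $N>d$ makes $\sum_{k}(1+|j-k|)^{-N}$ summable uniformly in $j$, and the Cotlar--Stein lemma then yields $\|\Op(a)\|_{L^{2}\to L^{2}}<\infty$, controlled by finitely many seminorms of $a$ (hence continuity on $L^{2}(\V)$, with a bound uniform in $\hbar\in(0,\hbar_{0}]$).

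I expect the almost-orthogonality estimate to be the main obstacle. One must set up the oscillatory integral for $T_{j}^{*}T_{k}$ carefully, identify the correct phase to integrate by parts against, and keep track of the powers of $\hbar$ so that the final decay is uniform: the frequency localization introduces $\hbar$-factors that must cancel against the $\hbar^{-d}$ coming from the kernel normalization $(2\pi\hbar)^{-d}$. By contrast, the uniform boundedness of each $T_{k}$ and the bookkeeping in the Cotlar--Stein summation are routine once the kernel estimate is in hand.
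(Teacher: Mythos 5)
The paper does not prove this statement at all: Theorem \ref{thm:CV} is imported as a weaker version of the Calder\'on--Vaillancourt theorem, citing \cite[Thm. 2.8.1, p. 43]{M}, and is used as a black box throughout. So your proposal is not an alternative to an argument in the paper; it is an outline of a proof of the cited theorem itself, and it is the standard one (unit-scale frequency decomposition, integration by parts against the phase, Cotlar--Stein), essentially the textbook argument. Your $\hbar$-bookkeeping is correct: with unit cubes in $\bp$ and the phase $e^{\frac{i}{\hbar}\langle\bp,x-\bq\rangle}$, each integration by parts in $\bp$ costs a factor of $\hbar$, giving $|K_{k}(x,\bq)|\le C_{N}\hbar^{-d}\big(1+|x-\bq|/\hbar\big)^{-2N}$, and the substitution $u=(x-\bq)/\hbar$ makes the Schur bound uniform in $k$ and in $\hbar\in(0,\hbar_{0}]$; the square-root loss in Cotlar--Stein is correctly absorbed by taking $N>d$.

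Two remarks on the step you leave as a plan. First, for the standard (Kohn--Nirenberg) quantization the two halves of the almost-orthogonality estimate are not symmetric, and one of them is free: since a symbol depending only on $\xi$ composes exactly on the right, formula (\ref{f5}) gives $T_{k}=\Op(a)\circ\chi(\hbar D-k)$, so the intermediate $\bq$-integration in the kernel of $T_{j}T_{k}^{*}$ produces a delta function identifying the two frequency variables, and $T_{j}T_{k}^{*}=0$ outright once the cutoffs $\chi(\cdot-j)$ and $\chi(\cdot-k)$ have disjoint supports. All the work is in $T_{j}^{*}T_{k}$, where the intermediate integration is in the spatial variable carrying the $x$-dependence of both symbols. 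Second, for that term, decay in $|j-k|$ alone does not bound an operator norm: you must produce a pointwise kernel bound to which Schur's test applies, and this requires integrating by parts \emph{both} in $x$ (phase gradient $(\eta-\bp)/\hbar$ of size $\gtrsim|j-k|/\hbar$, costing $x$-derivatives of $a$ and gaining $(\hbar/|j-k|)^{M}\le|j-k|^{-M}$, uniformly in $\hbar\le\hbar_{0}$) \emph{and} in the frequency variables, to retain integrable off-diagonal decay of type $(1+|y-z|/\hbar)^{-(d+1)}$ in the outer variables. Your sketch names the first mechanism but not the second; with both in place the argument closes. The remaining points (a smooth partition of unity needs slightly overlapping cubes rather than exact unit-cube supports; the convergence $\sum_{k}T_{k}\to\Op(a)$ holds on a dense subspace such as $C_{0}^{\infty}(\V)$, after which Cotlar--Stein and density conclude) are routine.
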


The class of pseudodifferential operators with bounded symbols is
closed under composition: Namely, we have that
\[
\Op(a)\circ\Op(b)=\Op(a\spm b),
\]
where
\begin{equation}
(a\spm b)(x,\xi)=\int a(x,\bp)b(\bq,\xi)e^{\frac{i}{\hbar}\langle(\bp-\xi),(x-\bq)\rangle}\m.\label{f5}
\end{equation}
is the \textbf{Standard product} between (bounded) symbols (see \cite{EZ}
for instance). 

\subsection{A class of bounded $\fio$s}

Let $\mathrm{Diff}(\V)$ be the group of diffeomorphisms of $\V$.
We will now consider Fourier integral operators $\Op(a,S)$ for which
the phase is of the form
\[
S(\bar{\xi},\bar{x},x)=\langle\bar{\xi},\varphi^{-1}(x)-\bar{x}\rangle,
\]
where $\varphi$ is a diffeomorphism on $\V$ of a special type. More
precisely, we focus on the cases when $\varphi$ lies in the following
subgroups of the diffeomorphisms on $\R^{d}$:

\begin{defn}
We define 

(1) the subgroup of \textbf{bounded diffeomorphisms} $\DiffB(\R^{d})$
to be the diffeomorphisms of $\R^{d}$ that have all of their derivatives
bounded, i.e. $\sup_{x\in\R^{d}}|\partial_{x}^{\beta}\varphi(x)|<\infty$
for all multi-indices $\beta\in\mathbb{N}^{d}\backslash(0,\dots,0)$;

(2) the subgroup of \textbf{volume preserving diffeomorphisms} $\DiffBV(\R^{d})$
to be the subgroup of $\DiffB(\R^{d})$ such that $\vert\varphi'(x)\vert=1$.
\end{defn}

\begin{prop} \label{prop:continuity}
Given $a\in S_{2d}(1)$ and $\varphi\in\DiffB(\V)$, the corresponding FIO 
\begin{equation}
\Op(a,\varphi)\psi(x)=\int\psi(\overline{x})a(x,\overline{\xi})e^{\frac{i}{\hbar}\langle\overline{\xi},\varphi^{-1}(x)-\overline{x}\rangle}\frac{d\overline{\xi}d\overline{x}}{(2\pi\hbar)^{d}},\label{f71}
\end{equation}
is a continuous linear operator on $L^{2}(\V)$. Moreover, if $\varphi\in\DiffBV(\R^{d})$
and the amplitude satisfies the additional condition
\begin{equation}
\frac{1}{(2\pi\hbar)^{\frac{d}{2}}}\int a^{*}(\varphi(x),\xi)a(\varphi(x),\bar{\xi})e^{\frac{i}{\hbar}\langle x,\xi-\bar{\xi}\rangle}dx=\delta(\xi-\bar{\xi}),\label{eq:adjoint}
\end{equation}
where $\delta$ is the delta function, then $\Op(a,\varphi)$ is unitary. 
\end{prop}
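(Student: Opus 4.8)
The plan is to reduce both assertions to the corresponding facts about pseudodifferential operators by factoring the FIO through the pullback. Writing $t_\varphi\psi(x)=\psi(\varphi^{-1}(x))$ for the pullback by $\varphi$, a direct substitution shows that
\[
\Op(a,\varphi)=t_\varphi\circ\Op(c),\qquad c(x,\xi):=a(\varphi(x),\xi),
\]
because evaluating $\Op(c)\psi$ at $\varphi^{-1}(x)$ and using $c(\varphi^{-1}(x),\xi)=a(x,\xi)$ reproduces the amplitude and the phase $\langle\bar\xi,\varphi^{-1}(x)-\bar x\rangle$ of (\ref{f71}). First I would check that $c\in S_{2d}(1)$: its $\xi$-derivatives are derivatives of $a$, while its $x$-derivatives are, by the chain rule, finite sums of products of derivatives of $a$ with derivatives of $\varphi$, all bounded since $a\in S_{2d}(1)$ and $\varphi\in\DiffB(\V)$. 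Hence $\Op(c)$ is continuous on $L^2(\V)$ by Theorem \ref{thm:CV}. It then remains to note that $t_\varphi$ is bounded on $L^2(\V)$: the change of variables $y=\varphi^{-1}(x)$ gives $\|t_\varphi\psi\|^2=\int|\psi(y)|^2|\varphi'(y)|\,dy$, and $|\varphi'|$ is bounded above because the first derivatives of $\varphi$ are bounded. Thus $\Op(a,\varphi)=t_\varphi\circ\Op(c)$ is a composition of two continuous operators, proving the first claim.

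For unitarity, assume $\varphi\in\DiffBV(\V)$, so $|\varphi'|\equiv1$. The same change of variables now gives $\|t_\varphi\psi\|=\|\psi\|$, and since the inverse $t_{\varphi^{-1}}$ is also norm-preserving, $t_\varphi$ is unitary. Consequently $\Op(a,\varphi)$ is unitary if and only if $\Op(c)$ is, so it suffices to establish $\Op(a,\varphi)^*\Op(a,\varphi)=\Op(a,\varphi)\Op(a,\varphi)^*=I$, which I would do at the level of Schwartz kernels. Writing $\Op(a,\varphi)\psi(x)=\int K(x,\bar x)\psi(\bar x)\,d\bar x$ with $K(x,\bar x)=(2\pi\hbar)^{-d}\int a(x,\bar\xi)e^{\frac{i}{\hbar}\langle\bar\xi,\varphi^{-1}(x)-\bar x\rangle}\,d\bar\xi$, the adjoint has kernel $\overline{K(\bar x,x)}$, and composition expresses the kernel of $\Op(a,\varphi)^*\Op(a,\varphi)$ as an integral over an intermediate position variable of the product $\overline{a(x,\eta)}\,a(x,\zeta)$ against two oscillatory factors coming from the two phases.

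The decisive step is the substitution $u=\varphi^{-1}(x)$ in this intermediate integral: because $\varphi$ is volume preserving the Jacobian equals $1$, the two occurrences of $\varphi^{-1}(x)$ both become $u$, and the $u$-dependent part of the phase collapses to $e^{\frac{i}{\hbar}\langle\zeta-\eta,u\rangle}$ multiplying $\overline{a(\varphi(u),\eta)}\,a(\varphi(u),\zeta)$. Integrating over $u$ produces exactly an expression of the form on the left-hand side of the adjoint condition (\ref{eq:adjoint}), which by hypothesis is a delta function in the frequency variables; feeding this back and carrying out the remaining frequency integration against the factor $e^{\frac{i}{\hbar}\langle\eta,x\rangle-\frac{i}{\hbar}\langle\zeta,z\rangle}$ yields the kernel $\delta(x-z)$, i.e.\ $\Op(a,\varphi)^*\Op(a,\varphi)=I$. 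The companion identity $\Op(a,\varphi)\Op(a,\varphi)^*=I$ follows either from the symmetric computation or from the fact that $\Op(c)$ is an isometry whose range is dense, so that $\Op(a,\varphi)$ is unitary.

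I expect the main obstacle to be bookkeeping rather than anything conceptual: every integral here is oscillatory and only conditionally convergent, so the interchange of the order of integration, the substitution, and the appearance of the delta functions must be justified distributionally, for instance by rewriting $\Op(c)\psi(x)=(2\pi\hbar)^{-d/2}\int c(x,\xi)\hat\psi(\xi)e^{\frac{i}{\hbar}\langle\xi,x\rangle}\,d\xi$ in terms of the $\hbar$-Fourier transform, testing against Schwartz functions, and invoking Plancherel. The one place where genuine care is needed is tracking the powers of $(2\pi\hbar)$ and the sign of the phase, so that the normalization in (\ref{eq:adjoint}) produces precisely the unnormalized kernel $\delta(x-z)$ of the identity; the consistency check $a\equiv1$, where $\Op(1,\varphi)=t_\varphi$ is manifestly unitary, fixes these constants.
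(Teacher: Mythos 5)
Your proof follows essentially the same route as the paper's: the factorization $\Op(a,\varphi)=t_\varphi\circ\Op(c)$ with $c(x,\xi)=a(\varphi(x),\xi)$ (the paper's $\varphi^{-1}a$), the chain-rule verification that $c\in S_{2d}(1)$, the change-of-variables argument for boundedness (resp.\ unitarity) of $t_\varphi$, Theorem \ref{thm:CV} for $L^2$-continuity, and a direct kernel computation showing that (\ref{eq:adjoint}) forces unitarity, which is exactly the ``direct computation'' the paper invokes. Your closing remark about tracking powers of $(2\pi\hbar)$ is well taken: the prefactor $(2\pi\hbar)^{-d/2}$ in (\ref{eq:adjoint}) as printed fails the consistency check $a\equiv 1$ (it should be $(2\pi\hbar)^{-d}$), but this is a normalization typo in the statement rather than a gap in your argument.
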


\begin{proof}
First consider the action of $\Diff(\V)$ on $C^{\infty}(T^{*}\V)$
defined by 

\begin{equation}
(\varphi a)(x,\xi)=a(\varphi^{-1}x,\xi),\,\varphi\in\Diff(\R^{d})\label{f70}
\end{equation}

and the action by pullback of $\Diff(\V)$ on the functions on $\R^{d}$:
\[
t_{\varphi}(\psi)(x)\,:=\,\psi(\varphi^{-1}(x)).
\]
These two actions have the following properties:

\begin{itemize}
\item the action (\ref{f70}) restricted to the subgroup $\DiffB(\R^{d})$
preserves the space of bounded symbols $S_{2d}(1)$;
\item if $\varphi\in\DiffB(\V)$, $t_{\varphi}$ is a continuous operator
on $L^{2}(\V)$ and, if $\varphi\in\DiffBV(\R^{d})$, it is also unitary.
\end{itemize}
The proof of the Proposition will follow now from the follwing identity 
\[
\Op(a,\varphi)=t_{\varphi}\circ\mathrm{Op}(\varphi^{-1}a),
\]
together with Theorem \ref{thm:CV} and from the fact that, for $\varphi\in\DiffB(\R^{d})$
(resp. $\varphi\in\DiffBV(\R^{d})$), $t_{\varphi}$ is continuous
(resp. unitary) while $\varphi^{-1}a$ remains bounded when $a$ is
bounded. A direct computation shows that (\ref{eq:adjoint}) implies
unitarity. 
\end{proof}

We now show that the FIOs of the form (\ref{f71}) are closed under
operator composition by defining a product-like operation for their
symbols. (Note that we will not obtain an algebra of symbols here,
since this new symbol product depends on the particular underlying
diffeomorphisms.)

\begin{prop}
\noindent \label{pro01} Let $\varphi_{1},\varphi_{2}\in\mathrm{Diff}_{b}(\V)$
and $a,b\in S_{2d}(1)$, then 
\begin{equation}
\Op(a,\varphi_{1})\circ\Op(b,\varphi_{2})=\Op(a{_{\varphi_{1}}\star_{\varphi_{2}}}b,\,\varphi_{2}\circ\,\varphi_{1})\label{eq:composition}
\end{equation}
where $(a{_{\varphi_{1}}\star_{\varphi_{2}}}b)(x,\xi)$ is given by
the integral 
\begin{equation}
\int a(x,\bp)b(\bq,\xi)e^{\frac{i}{\hbar}\big(\langle\xi,\varphi_{1}^{-1}(\bq)-\varphi_{1}^{-1}\circ\varphi_{2}^{-1}(x)\rangle+\langle\bp,\varphi_{1}^{-1}(x)-\bq\rangle\big)}\m\label{f111}
\end{equation}
\end{prop}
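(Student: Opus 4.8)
The plan is to compute the composition $\Op(a,\varphi_1)\circ\Op(b,\varphi_2)$ directly from the integral definition (\ref{f71}) and then rearrange the resulting multiple integral into the FIO form associated with the composed diffeomorphism $\varphi_2\circ\varphi_1$. First I would write out $\big(\Op(b,\varphi_2)\psi\big)(\bq)$ as an integral over variables, say $\bq',\bp'$, using a fresh set of integration labels, and then apply $\Op(a,\varphi_1)$ to this, producing the integral defining $\big(\Op(a,\varphi_1)\circ\Op(b,\varphi_2)\psi\big)(x)$. The key observation making this manipulation legitimate is Proposition \ref{prop:continuity}: for $a,b\in S_{2d}(1)$ and $\varphi_1,\varphi_2\in\DiffB(\V)$, both operators are continuous on $L^2(\V)$, so their composition is a well-defined bounded operator whose Schwartz kernel we are entitled to compute; the oscillatory integrals should be read in the usual regularized (or distributional) sense rather than as absolutely convergent integrals.

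Next I would carry out the variable bookkeeping. The phase coming from the outer operator $\Op(a,\varphi_1)$ is $\langle\bp,\varphi_1^{-1}(x)-\bq'\rangle$ (with $\bq'$ the intermediate spatial variable on which $a$ acts), and the phase from the inner operator $\Op(b,\varphi_2)$ is $\langle\bp',\varphi_2^{-1}(\bq')-\bq\rangle$. The plan is to integrate out the intermediate frequency variable and one intermediate spatial variable. The natural route is to recognize the inner integral over the intermediate frequency as producing a delta function that collapses one spatial integration; concretely, the identity $\int e^{\frac{i}{\hbar}\langle\bp',z\rangle}\frac{d\bp'}{(2\pi\hbar)^d}=\delta(z)$ lets me eliminate one pair of variables. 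I would then relabel the surviving frequency and spatial variables as $\bp$ and $\bq$ to match (\ref{f111}), and I would use the substitution sending the remaining intermediate spatial variable through $\varphi_1^{-1}$ (or $\varphi_2^{-1}$) so that the composed phase organizes itself into the two pairings $\langle\xi,\varphi_1^{-1}(\bq)-\varphi_1^{-1}\circ\varphi_2^{-1}(x)\rangle+\langle\bp,\varphi_1^{-1}(x)-\bq\rangle$. Reading off the amplitude $a(x,\bp)b(\bq,\xi)$ against this phase gives precisely the claimed formula (\ref{eq:composition}) with the twisted product (\ref{f111}), and the diffeomorphism attached to the composite is $\varphi_2\circ\varphi_1$ because $\varphi_1^{-1}\circ\varphi_2^{-1}=(\varphi_2\circ\varphi_1)^{-1}$ appears in the phase.

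The main obstacle I expect is the rigorous justification of the formal manipulations: interchanging the order of the oscillatory integrations and using the Fourier inversion/delta-function collapse are not valid as Fubini statements for absolutely convergent integrals, since the amplitudes in $S_{2d}(1)$ are merely bounded, not integrable, and the integrals are genuinely oscillatory. The clean way to handle this is to interpret all the integrals as continuous bilinear pairings against Schwartz test functions $\psi$ (so that the operators act as maps $C_0^\infty(\V)\to\mathcal D'(\V)$ and, by Proposition \ref{prop:continuity}, extend to $L^2$), and to insert the usual convergence factors or integrate by parts in the phase to regularize before taking limits. Once the composition is known to be a bounded operator and the kernel computation is performed against a dense class of test functions, the identity (\ref{eq:composition}) holds by continuity. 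A secondary point to verify is that $a{_{\varphi_1}\star_{\varphi_2}}b$ again lies in $S_{2d}(1)$ so that the right-hand side is itself an operator in our class; this follows because $\varphi_1,\varphi_2\in\DiffB(\V)$ keep all the substituted arguments and their derivatives bounded, as already used in the proof of Proposition \ref{prop:continuity}.
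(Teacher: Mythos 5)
Your opening plan---compute $\Op(a,\varphi_{1})\circ\Op(b,\varphi_{2})\psi$ directly from (\ref{f71}) and reorganize the resulting fourfold oscillatory integral into FIO form for the composed diffeomorphism---is exactly the paper's approach. But the mechanism you propose for the reorganization would fail. The composite is
\[
\int\psi(\tq)\,a(x,\bp)\,b(\bq,\tp)\,e^{\frac{i}{\hbar}\big(\langle\bp,\varphi_{1}^{-1}(x)-\bq\rangle+\langle\tp,\varphi_{2}^{-1}(\bq)-\tq\rangle\big)}\m\n,
\]
and here \emph{no} pair of variables can be eliminated by Fourier inversion: the identity $\int e^{\frac{i}{\hbar}\langle\bp',z\rangle}\frac{d\bp'}{(2\pi\hbar)^{d}}=\delta(z)$ requires the rest of the integrand to be independent of $\bp'$, whereas each frequency variable occurs as an argument of an amplitude ($\bp$ in $a(x,\bp)$ and $\tp$ in $b(\bq,\tp)$). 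A variable count confirms that nothing should be integrated out: the right-hand side of (\ref{eq:composition}), once (\ref{f111}) is inserted, is itself a fourfold integral---two outer FIO variables plus the two variables $(\bp,\bq)$ retained \emph{inside} the new amplitude (just as in the standard product (\ref{f5}), whose derivation also involves no delta collapse). Your proposed substitution through $\varphi_{1}^{-1}$ or $\varphi_{2}^{-1}$ is likewise neither needed nor harmless: a bounded diffeomorphism need not be volume preserving, so it would introduce a Jacobian factor absent from (\ref{f111}), and it would simultaneously distort the pairing $\langle\bp,\varphi_{1}^{-1}(x)-\bq\rangle$.

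The step you are missing is purely algebraic and requires no Fubini argument, no delta functions, and no change of variables: add and subtract $\langle\tp,\varphi_{1}^{-1}\circ\varphi_{2}^{-1}(x)\rangle$ in the phase, rewriting it as
\[
\langle\tp,(\varphi_{2}\circ\varphi_{1})^{-1}(x)-\tq\rangle+\Big(\langle\tp,\varphi_{2}^{-1}(\bq)-\varphi_{1}^{-1}\circ\varphi_{2}^{-1}(x)\rangle+\langle\bp,\varphi_{1}^{-1}(x)-\bq\rangle\Big),
\]
and then absorb the bracketed remainder, together with the $(\bp,\bq)$ integrations, into the definition of the product amplitude; the surviving $(\tp,\tq)$ integration exhibits the standard phase for the composed diffeomorphism, which is the whole proof. (Note that this derivation produces $\varphi_{2}^{-1}(\bq)$ in the first pairing, as in the paper's own proof of the proposition; the $\varphi_{1}^{-1}(\bq)$ in the displayed formula (\ref{f111}) cannot be reached by any Jacobian-free substitution and is best read as a misprint, as is the ordering of the composed diffeomorphism---compare the usage $\varphi_{g_{1}\dots g_{k+l}}$ in (\ref{eq:operator_composition}).) Two further remarks: your insistence on regularizing the oscillatory integrals is legitimate but goes beyond the paper, which establishes the identity by exactly the formal manipulation above; and your closing claim that $a\,{}_{\varphi_{1}}\!\star_{\varphi_{2}}b\in S_{2d}(1)$ follows merely because substituted arguments stay bounded is too quick---(\ref{f111}) is an oscillatory integral of a bounded integrand, and boundedness of the result with all its derivatives requires stationary-phase or Calder\'on--Vaillancourt-type estimates, as for the standard product; the paper leaves this point implicit as well.
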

 
\begin{proof}
\noindent We first compute the composition
\[
\Big(\Op(a,\varphi_{1})\Op(b,\varphi_{2})\psi)\Big)(x)
\]
 directly, and we obtain 
\[
\int\psi(\tq)a(x,\bp)b(\bq,\tp)e^{\frac{i}{\hbar}\big(\langle\bp,\varphi_{1}^{-1}(x)-\bq\rangle+\langle\tp,\varphi_{2}^{-1}(\bq)-\tq\rangle\big)}\m\n.
\]
The phase of the oscillatory exponential in the line above can be
rewritten as follows
\[
\langle\tp,(\varphi_{2}\circ\varphi_{1})^{-1}(x)-\tq\rangle+\Big(\langle\tp,\varphi_{2}^{-1}(\bq)-\varphi_{1}^{-1}\circ\varphi_{2}^{-1}(x)\rangle+\langle\bp,\varphi_{1}^{-1}(x)-\bq\rangle\Big)
\]
so that, defining the product $(a{_{\varphi_{1}}\star_{\varphi_{2}}}b)$
as in (\ref{f111}), we obtain (\ref{eq:composition}).
\end{proof}

\begin{lem}
Let $\varphi_{1},\varphi_{2},\varphi_{3}\in\mathrm{Diff}_{b}(\V)$
and $a,b,c\in S_{2d}(1).$ Then
\begin{equation}
a_{\,\varphi_{1}}\star_{\varphi_{2}\varphi_{3}}(b_{\,\varphi_{2}}\star_{\varphi_{3}}c)=(a_{\,\varphi_{1}}\star_{\varphi_{2}}b)_{\,\varphi_{1}\varphi_{2}}\star_{\varphi_{3}}c.\label{eq:amplitude_composition}
\end{equation}
\end{lem}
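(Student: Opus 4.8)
The plan is to deduce this amplitude identity from the associativity of operator composition, using Proposition \ref{pro01} to pass between operators and their symbols. Composition of linear operators on $L^{2}(\V)$ is automatically associative, so the triple composite $\Op(a,\varphi_{1})\circ\Op(b,\varphi_{2})\circ\Op(c,\varphi_{3})$ may be bracketed in either order and the two results must agree. Applying (\ref{eq:composition}) twice to the left bracketing $\big(\Op(a,\varphi_{1})\circ\Op(b,\varphi_{2})\big)\circ\Op(c,\varphi_{3})$ yields $\Op\big((a_{\,\varphi_{1}}\star_{\varphi_{2}}b)_{\,\varphi_{1}\varphi_{2}}\star_{\varphi_{3}}c,\ \varphi_{3}\circ\varphi_{2}\circ\varphi_{1}\big)$, while applying it twice to the right bracketing $\Op(a,\varphi_{1})\circ\big(\Op(b,\varphi_{2})\circ\Op(c,\varphi_{3})\big)$ yields $\Op\big(a_{\,\varphi_{1}}\star_{\varphi_{2}\varphi_{3}}(b_{\,\varphi_{2}}\star_{\varphi_{3}}c),\ \varphi_{3}\circ\varphi_{2}\circ\varphi_{1}\big)$. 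The bookkeeping point is that both double applications compose the underlying diffeomorphisms in the same order, so both FIOs carry the identical diffeomorphism $\varphi_{3}\circ\varphi_{2}\circ\varphi_{1}$; the juxtaposition $\varphi_{i}\varphi_{j}$ appearing in the left subscripts is precisely the composite label $\varphi_{j}\circ\varphi_{i}$ that (\ref{eq:composition}) assigns to a product of two operators, which is exactly what makes the two bracketings reproduce the two sides of (\ref{eq:amplitude_composition}).

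Once both operators are seen to have the same underlying diffeomorphism $\varphi:=\varphi_{3}\circ\varphi_{2}\circ\varphi_{1}$, it remains to cancel $\Op(\cdot,\varphi)$, i.e. to show that $a\mapsto\Op(a,\varphi)$ is injective on $S_{2d}(1)$. For this I would reuse the factorization $\Op(a,\varphi)=t_{\varphi}\circ\Op(\varphi^{-1}a)$ from the proof of Proposition \ref{prop:continuity}. The operator $t_{\varphi}$ is a linear bijection with inverse $t_{\varphi^{-1}}$, and the action (\ref{f70}) of $\varphi^{-1}$ on symbols is a bijection of $S_{2d}(1)$, so injectivity of $a\mapsto\Op(a,\varphi)$ reduces to injectivity of the standard quantization $a\mapsto\Op(a)$. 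The latter is classical, since the total symbol is recovered from the Schwartz kernel of $\Op(a)$ by a partial Fourier transform. Stringing these three injectivities together lets me strip $\Op(\cdot,\varphi)$ off the equality of the two bracketed operators and conclude (\ref{eq:amplitude_composition}).

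The main obstacle is the ordering bookkeeping rather than any hard analysis: one must check that the two iterated uses of Proposition \ref{pro01} genuinely land on the same composite diffeomorphism and that the left subscripts $\varphi_{1}\varphi_{2}$ and $\varphi_{2}\varphi_{3}$ appear with the orientation dictated by (\ref{eq:composition}), since an order slip here would manufacture a superficially similar but false identity. As a cross-check, and as an alternative route that sidesteps the injectivity argument entirely, I would also verify (\ref{eq:amplitude_composition}) directly: substituting (\ref{f111}) into itself on each side produces two fourfold oscillatory integrals over the auxiliary frequency and position variables, and a single affine change of variables in those auxiliary variables, absorbing the relevant factors $\varphi_{i}^{-1}$ into the integration variables, matches the two phases term by term and leaves identical integrands. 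This direct verification is routine but notationally heavy, which is exactly why I would present the operator-composition argument above as the main proof.
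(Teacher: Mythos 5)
Your proposal is correct and is essentially the paper's own proof: the authors deduce (\ref{eq:amplitude_composition}) in one line from the associativity of operator composition applied to $\Op(a,\varphi_{1})\circ\Op(b,\varphi_{2})\circ\Op(c,\varphi_{3})$ together with the composition formula (\ref{eq:composition}). Your only addition is to make explicit the cancellation step the paper leaves tacit --- namely the injectivity of $a\mapsto\Op(a,\varphi)$, which you correctly reduce via the factorization $\Op(a,\varphi)=t_{\varphi}\circ\Op(\varphi^{-1}a)$ to the classical injectivity of the standard quantization recovered from the Schwartz kernel --- a worthwhile sharpening of an otherwise identical argument.
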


\begin{proof}
This comes immediately from the fact that 
\[
\Op(a,\varphi_{1})\circ\Big(\Op(b,\varphi_{2})\circ\Op(c,\varphi_{3})\Big)=\Big(\Op(a,\varphi_{1})\circ\Op(b,\varphi_{2})\Big)\circ\Op(c,\varphi_{3})
\]
together with (\ref{eq:composition}).
\end{proof}

\subsection{Asymptotic expansions and formal operators\label{sub:Asymptotic-expansion}}

We now work out the asymptotic expansion of the bounded operators
(\ref{f71}) in the limit $\hbar\rightarrow0$. First, we fix the
dependence in $\hbar$ for the amplitude as follows
\begin{equation}
a(x,\xi)=a^{0}(x,\xi)+a^{1}(x,\xi)\hbar+a^{2}(x,\xi)\hbar^{2}+\cdots,\label{eq:asymptotic}
\end{equation}
where the $a^{n}\in S_{2d}(1)$ do not depend on $\hbar$ for all
$n$. Namely, the Borel summation lemma (see \cite[Prop. 2.3.2, p. 14]{M}
for instance) guarantees then that there exists an amplitude in $S_{2d}(1)$
depending on $\hbar$ whose asymptotic expansion in $\hbar$ yields
back (\ref{eq:asymptotic}). 

Now, changing the variable $\tilde{\xi}=\xi/\hbar$ and letting $\hbar\rightarrow0$
(which allows us to perform a Taylor's series of the amplitude at
$(x,0)$), we obtain that 
\begin{eqnarray*}
\Op(a(x,\xi),\varphi)\psi(x) & = & \int\psi(\overline{x})a(x,\hbar\tilde{\xi})e^{i\langle\tilde{\xi},\varphi^{-1}(x)-\overline{x}\rangle}\frac{d\tilde{\xi}d\overline{x}}{(2\pi)^{d}},\\
 & = & \sum_{n\geq0}\hbar^{n}\op(P^{n},\varphi)\psi(x).
\end{eqnarray*}
where $\op$ is the same integral operator as $\Op$ except with the
parameter $\hbar$ in the phase set to $1$, and where
\[
P^{n}(x,\xi)=\sum_{k=0}^{n}f_{\alpha}(x)\xi^{\alpha},
\]
are polynomial in $\xi$ of order $n$ with coefficients in $S_{d}(1)$
(actually, $f_{\alpha}(x)=\frac{1}{|\alpha|!}\partial_{\xi}^{\alpha}a_{n-|\alpha|}(x,0)$). 

Since, for a polynomial $P^{n}(x,\xi)$ in $\xi$ as above, the corresponding
operator 
\[
\op(P^{n},\varphi)\psi(x)=\sum_{|\alpha|\leq n}f_{\alpha}(x)(D_{x}^{\alpha}\psi)(\varphi^{-1}(x))=\left(P^{n}\left(x,D\right)\psi\right)(\varphi^{-1}(x))
\]
is a differential operator of order $n$ (composed with a pullback),
we $\mathcal{}$obtain for $\Op(a,\varphi)$ an asymptotic expansion
in terms of infinite order differential operators of the form: 
\begin{equation}
\Op(a,\varphi)\psi(x)=P^{0}(x)\psi(\varphi^{-1}(x))+\sum_{n\geq1}\hbar^{n}\left(P^{n}\left(x,D\right)\psi\right)(\varphi^{-1}(x)).\label{eq: formal operators}
\end{equation}

\begin{rem}
This derivation for the asymptotic (\ref{eq: formal operators}) is
a shortcut for the usual stationary phase expansion. One recovers
(\ref{eq: formal operators}) by using the usual stationary phase
expansion (see \cite{EZ}) for quadratic phase using the following
change of variable $\bar{y}=\varphi^{-1}(x)-\bar{x}$.
\end{rem}

In the following definition, we retain only the formal aspects of
the asymptotics, forgetting that the operators (\ref{f71}) are actually
bounded operators (i.e. the amplitudes are in $S_{2d}(1)$ and the
action is in $\mathrm{Diff}_{b}(\V)$). This will allows us later
on to consider quantizations of actions that are not necessarily volume-preserving
nor bounded. 

\begin{defn}
We define the algebra $\mathcal{D}$ of formal operators of the form
\begin{equation}
\Op_{1}(P,\varphi)\psi(x)=P^{0}(x)\psi(\varphi^{-1}(x))+\sum_{n\geq1}\hbar^{n}\left(P^{n}\left(x,D\right)\psi\right)(\varphi^{-1}(x)),\quad\varphi\in\Diff(\V)\label{eq:formal operators}
\end{equation}
which acts on the formal space of functions $C^{\infty}(\R^{d})[[\hbar]]$,
and where 
\[
P^{n}(x,D)=\sum_{|\alpha|\leq n}f_{\alpha}(x)D^{\alpha},
\]
is a differential operator of order $n$ with coefficients $f_{\alpha}\in C^{\infty}(\R^{d})$.
The corresponding space of symbols $\mathcal{P}$ is the space of
formal functions of the form
\[
P(x,\xi)=P^{0}(x)+\sum_{n\geq1}\hbar^{n}\sum_{|\alpha|\leq n}f_{\alpha}(x)\xi^{\alpha},
\]
with $P^{0}(x),f_{\alpha}\in C^{\infty}(\R^{d})$. 
\end{defn}

Note that, as before, we obtain a composition of formal symbols of
thanks to 
\begin{equation}
\op(P,\varphi_{1})\circ\op(K,\varphi_{2})=\op(P\,_{\varphi_{1}}\star_{\varphi_{2}}K,\varphi_{1}\circ\varphi_{2}).\label{eq:formal product}
\end{equation}
Again, this does not define an algebra structure on $\mathcal{P}$
since the composition depends on the underlying bounded diffeomorphisms
$\varphi_{1}$ and $\varphi_{2}$.

\section{Quantization of $G$-actions\label{sec:Quantization-of--actions}}

In this section, we \textit{quantize} a given action $\varphi$ of
a group $G$ on $\R^{d}$ (using the Fourier integral operators of
the previous section as well as their asymptotics). By this, we mean
to associate with $\varphi$ a set $\Rep_{\varphi}(G)$ of infinite
dimensional representations of $G$ on an appropriate space of {}``functions''
on $\R^{d}$. We call \textbf{a quantization} \textbf{of the action}
a representation in $\Rep_{\varphi}(G)$. 

The actual implementation of $\Rep_{\varphi}(G)$ (i.e. the choice
of the functional space on which we represent the group as well as
the properties of the operators forming the quantization) depends
on the type of actions at hand. We distinguish between three cases,
all of which contain what we call the \textbf{trivial quantizatio}n,
i.e. the representation obtained by pullback of functions:
\begin{equation}
(T_{g}\psi)(x)=\psi(\varphi_{g^{-1}}x),\quad g\in G.\label{f180-1}
\end{equation}
Quantizations in $\Rep_{\varphi}(G)$ can be regarded, in a sense,
as {}``deformations'' of the trivial quantization. 

Here are the three cases we are interested in:
\begin{itemize}

\item An action of a group $G$ on $\R^{d}$ by \textsl{smooth} diffeomorphisms
(i.e. $\varphi_{g}\in\Diff(\R^{d})$ for all $g\in G$), which we
call here simply an action. 

\item An action of a group on $G$ on $\R^{d}$ by \textsl{bounded smooth}
diffeomorphisms (i.e. $\varphi_{g}\in\DiffB(\R^{d})$ for all $g\in G$),
which we call a \textbf{bounded action}.

\item An action of a group $G$ on $\R^{d}$ by \textsl{bounded and volume-preserving
smooth} diffeomorphisms (i.e. $\varphi_{g}\in\DiffBV(\R^{d})$ for
all $g\in G$), which we call a \textbf{volume-preserving action}. 

\end{itemize}

\subsection{Unitary G-systems}

If the action is a \textit{volume-preserving and bounded} (i.e $\varphi_{g}\in\DiffBV(\V)$
for all $g\in G$), $\Rep_{\varphi}(G)$ is a set of representations
by \textit{bounded} \textit{unitary} operators on the Hilbert space
$L^{2}(\V)$ of the square integrable functions on $\V$. These operators
are of the form (\ref{f71}), with amplitudes in $S_{2d}(1)$ and
satisfying the unitarity condition (\ref{eq:adjoint}). Observe that
the trivial quantization in this case is formed by bounded unitary
operators on $L^{2}(\R^{d})$. We define:

\begin{defn}
\label{d4} A \textbf{unitary $G$-system of amplitudes} (associated
with a volume-preserving and bounded action $\varphi$ of a group
$G$ on $\V$) is a map
\begin{equation}
a:G\longrightarrow S_{2d}(1)
\end{equation}
such that the collection of operators
\[
T_{g}^{a}:=\Op(a_{g},\varphi_{g})
\]
(defined in (\ref{f71})) forms a unitary representation of $G$ by
bounded operators on $L^{2}(\V)$. 
\end{defn}

\begin{example}
\textbf{Unitary G-systems from geometric quantization}.\label{sub:Geometric-quantization-intermezz}
Suppose we have a volume-preserving action $\varphi$ of a Lie group
$G$ on $\bR^{2n}$ (endowed with its canonical symplectic form $\omega=\sum_{i}dp_{i}\wedge dx^{i}$),
which is hamiltonian and which admits a momentum map
\[
J:\mathfrak{g}\rightarrow C^{\infty}(\bR^{2n}).
\]
The condition $|\varphi'(x)|=1$ is always satisfied, since $\varphi_{g}:\bR^{2n}\rightarrow\bR^{2n}$
is a symplectomorphism for all $g\in G$; so, here, the \textit{volume-preserving}
condition on $\varphi$ is only really a condition on the boundedness
of $\varphi_{g}$ as well as on its derivatives. 

Geometric quantization prescribes then a way (as explained in \cite[ch. 8. sec. 4]{W}
for instance) to associate a unitary flow on $L^{2}(\bR^{2n})$ with
the hamiltonian flow $\varphi_{t}$ integrating the hamiltonian vector
field $X_{f}$ of a function $f\in C^{\infty}(\R^{2n})$; namely,
\[
U_{t}(f)\Psi(x)=\exp\left(i\int_{0}^{t}\mathcal{L}_{f}(\varphi_{s}^{-1}(x))ds\right)\Psi(\varphi_{t}^{-1}(x)),
\]
where $\mathcal{L}_{f}=\theta(X_{f})-f$, $\theta$ is the canonical
Liouville $1$-form on $\R^{2n}$ and $x=(p,q)\in\bR^{2n}$. If $f$
is a complete function (i.e. $X_{f}$ is a complete vector field),
then $U_{t}(f)$ forms a $1$-parameter group. 

Now, if the Lie group $G$ is nilpotent for instance and the hamiltonian
vector fields $X_{J(v)}$ are complete for all $v\in\mathfrak{g}$,
we obtain a unitary representation of $G$ on $L^{2}(\bR^{2n})$ by
taking
\[
\rho_{g}:=U_{1}(J(v)),\quad g=\exp(v),
\]
where $\exp$ is the exponential map from $\mathfrak{g}$ to $G$,
which is a diffeomorphism for nilpotent groups. 

Observe that, if we set 
\[
a_{g}(x)=\exp\left(i\int_{0}^{1}\mathcal{L}_{J(v)}(\varphi_{s}^{-1}(x))ds\right),\quad g=\exp(v),
\]
the representation $\rho_{g}$ can be regarded as a quantization $T_{g}^{a}\in\Rep_{\varphi}(G)$
associated with the $G$-system $a_{g}$, which is independent of
$\xi$. 
\end{example}

\begin{example}
\textbf{\label{Unitary-G-systems-from-Galilean-boost}Unitary G-systems
from galilean covariance.} Consider the space-time $\R^{4}=\R_{t}\times\R_{x}^{3}$.
The additive group $\R_{v}^{3}$ translations acts on $\R^{4}$ by
galilean boost 
\[
\varphi_{v}(t,x)=(t,x+vt).
\]
In (non-relativistic) quantum mechanics, dynamics is described by
square integrable functions $\Psi:\R^{4}\rightarrow\mathbb{C}$ satisfying
the Schr\"odinger equation $i\partial_{t}\Psi=H\Psi$, where $H$
is the Hamiltonian operator. It turns out that this equation is not
covariant with respect to the trivial quantization of the galilean
boost. To obtain covariance, one needs to use the following unitary
$G$-system $a_{v}(t,x,\xi)=e^{-i(\frac{1}{2}mv^{2}t-mvx)}$, which
yields the quantization
\[
T_{v}^{a}\Psi(t,x)=e^{-i(\frac{1}{2}mv^{2}t-mvx)}\psi(t,x-vt).
\]
\end{example}

There seems to be many examples in the literature of unitary $G$-systems
that are independent of $\xi$ as in the previous examples (see also
for instance the representation in \cite[p. 544, p. 557]{Rieffel2}
)Because of this, we devote Section \ref{sub:-systems-independent-of-xi}
to the study of these special $G$-systems. 

Let us give an example of unitary $G$-system that also depends on
$\xi$. 

\begin{example}
Consider the multiplicative group $\R^{+}$ of the strictly positive
real numbers and its trivial action on $\R$, i.e. $\varphi_{g}(x)=x$.
Then 
\[
a_{g}(\xi)=e^{\frac{i}{\hbar}\xi\ln(g)}
\]
is a unitary $G$-system. Namely, one verifies that the corresponding
operator is then given by
\[
T_{g}\psi(x)=\psi(x+\ln(g)),
\]
which is a unitary representation on $L^{2}(\R)$. 
\end{example}

If the action is only \textit{bounded}, (i.e. $\varphi_{g}\in\DiffB(\V)$
for all $g\in G$), $\Rep_{\varphi}(G)$ is again a set of infinite
dimensional representations on $L^{2}(\R^{d})$, except that now the
operators (\ref{f71}) forming the representations are no longer unitary.
Condition (\ref{eq:adjoint}) on the amplitudes is then dropped, but
we still require that the amplitudes are in $S_{2d}(1)$. In this
case, the operators $T_{g}:L^{2}(\R^{d})\rightarrow L^{2}(\R^{d})$
forming the trivial quantization are only bounded but non longer unitary,
since the action is not volume-preserving. We define:

\begin{defn}
\label{d4-1} A \textbf{(non-unitary) $G$-system of amplitudes} (associated
with a bounded action $\varphi$ of a group $G$ on $\V$) is a map
\begin{equation}
a:G\longrightarrow S_{2d}(1)
\end{equation}
such that the collection of operators
\[
T_{g}^{a}:=\Op(a_{g},\varphi_{g})
\]
(defined in (\ref{f71})) forms a representation of $G$ by bounded
operators on $L^{2}(\V)$. 
\end{defn}

\subsection{Formal G-systems\label{sub:Formal-G-systems}}

The operators in (\ref{f71}) that we used to define quantizations
of actions in the two previous cases depend on a parameter $\hbar$
and, thus, have an asymptotic expansion in terms of formal operators
as discussed in Section \ref{sub:Asymptotic-expansion}. 

We can now forget that these expansions comes from well-defined bounded
operators on $L^{2}(\R^{d})$ and use the formal operators (\ref{eq:formal operators})
to define formal quantizations when the action is neither bounded
nor volume-preserving. In this case, $\Rep_{\varphi}(G)$ is a set
of formal representations by formal operators of the form (\ref{eq:formal operators})
on the space $C^{\infty}(\R^{d})[[\hbar]]$ of formal power series
in the formal parameter $\hbar$ with value in the smooth function
on $\R^{d}$. More precisely, we define:
\begin{defn}
A \textbf{formal $G$-system of amplitudes} (associated with an action
of a group $G$ on $\R^{d}$) is a map
\[
a:G\longrightarrow\mathcal{P}
\]
such that the formal operators
\[
T_{g}^{a}=\op(a_g,\varphi_{g})\in\mathcal{D}
\]
form a representation of $G$ on $C^{\infty}(\R^{d})[[\hbar]]$, where
$\op(a,\varphi_{g})$ is defined as in (\ref{eq:formal operators}).
\end{defn}
Formal $G$-systems seem to be related to both deformation quantization
($G$-equivariant star-products) and deformation theory of Lie morphisms,
when the action we start with is a smooth action of a \textit{Lie
group} on $\R^{d}$. Let us comment here briefly on these points. 

In deformation quantization (\cite{BFFLS}), one quantizes an action
(by Poisson diffeomorphisms) of a Lie group $G$ on a Poisson manifold
$M$ by constructing $G$-equivariant star-products $\star$ on $M$.
For us, $M=\R^{d}$ . (This notion is somewhat different whether one
considers formal deformations, as in \cite{BFFLS}, or strict ones,
as in \cite{Rieffel2}.) The idea is to find star-products $\star$
(quantizing a Poisson structure on $\R^{d}$ that is invariant with
respect to the group action), which has the following property ($G$-equivariance):
\[
T_{g}\psi_{1}\star T_{g}\psi_{2}=T_{g}(\psi_{1}\star\psi_{2}),
\]
where $\psi_{1},\psi_{2}\in C^{\infty}(\R^{d})[[\hbar]]$ and $T$
is the aforementioned trivial quantization of the action. 

Despite compatibility between the action and the Poisson structure,
the star-products quantizing the Poisson structure are generally not
$G$-equivariant ($G$-equivariant star-products may even not exist
at all; see \cite{As}). Thus, in some cases, one also needs to {}``deform''
the trivial quantization to obtain $G$-equivariance (for the deformed
action), as in \cite{A1,A2} in the formal case (for the corresponding
infinitesimal action), or as in \cite{Rieffel2} for strict quantization
of the Heisenberg manifolds. 

The latter case is specially interesting for us, since the deformation
of the action $\varphi$ of the Heisenberg group $G$ on the Heisenberg
manifolds is of the form (\ref{eq:deformation}) for a certain $G$-system
independent of $\xi$ (see \cite[p. 557]{Rieffel2}). 

It would be interesting to see if, for a given (strict) star-product
on a Poisson manifold on which a Lie group $G$ acts by Poisson diffeomorphisms,
one can always find a deformation of the trivial quantization in our
space of quantization $\Rep_{\varphi}(G)$ that is $G$-equivariant. 

There is also a way in which quantizations of actions by $G$-systems
as defined above may be related to the general theory of Lie morphism
deformations as in \cite{NiRi}, and, more specifically, to the work
of Ovsienko and collaborators (\cite{AgAmLeOv,Ov}) on embeddings
of the Lie algebra of vector fields into various Lie algebras (and,
in particular, the Lie algebra of pseudodifferential operators). 

Namely, the infinitesimal version of the trivial quantization of an
action yields an embedding from a Lie subalgebra of the vector fields
on the manifold into its Lie algebra of pseudodifferential operators.
Then the infinitesimal representations associated with quantizations
in $\Rep_{\varphi}(G)$ (i.e., the space of Lie algebra representations
corresponding to the unitary/formal representations in $\Rep_{\varphi}(G)$)
should, in a sense, be related to deformations of this embedding. 

It would also be interesting to compare the various obstructions (and
actual deformations) obtained in this infinitesimal context with the
obstructions we obtain in Section \ref{sec:Existence-and-rigidity}.

\subsection{Unitary $G$-systems independent of $\xi$\label{sub:-systems-independent-of-xi}}

Let $\varphi$ be a volume-preserving action of $G$ on $\V$. We
are looking for $G$-systems associated with this action for which
the amplitudes do not depend on $\xi$. Example \ref{sub:Geometric-quantization-intermezz}
from geometric quantization and Example \ref{Unitary-G-systems-from-Galilean-boost}
from the galilean covariance of the Schr\"odinger equation are of
this type. In the context of strict deformation quantization the representations
in \cite[p. 544, p. 557]{Rieffel2} are also of this type. 

Let us study these $G$-systems independently. We start by defining
a useful complex:

Denote by $\mathcal{B}$ the space of smooth functions on $\V$ with
all of their derivatives bounded. One verifies that $\mathcal{B}$
is a left $G$-module with respect to the action
\[
(g\cdot S)(x)=S(\varphi_{g}^{-1}(x)),\quad S\in\mathcal{B}.
\]
Observe that, in contrast with $S_{d}(1)$, we do not require that
a function in $\mathcal{B}$ be bounded (only its derivatives). We
further turn $\mathcal{B}$ into a $G$-bimodule by considering the
right action of $G$ on $\mathcal{B}$. Now consider the group cohomology
with values in the bimodule $\mathcal{B}$. The corresponding space
$C_{\varphi}^{k}(G,\mathcal{B})$ of (normalized) $k$-cochains is
given by the smooth maps
\[
S:G^{k}\longrightarrow\mathcal{B},\quad k\geq0
\]
such that $S_{g_{1},\dots,g_{k}}=0$ if one of the $g_{i}$'s is the
group unit. The differential
\[
\delta:C_{\varphi}^{k}(G,\mathcal{B})\rightarrow C_{\varphi}^{k+1}(G,\mathcal{B})
\]
is given by the usual formula
\[
(\delta c)_{g_{1},\dots,g_{k+1}}=g_{1}\cdot c_{g_{2},\dots,g_{k+1}}-c_{g_{1}g_{2},\dots,g_{k}}+\cdots\pm c_{g_{1},\dots,g_{k}g_{k+1}}\mp c_{g_{1},\dots,g_{k}},
\]
where the right action by $g_{k+1}$ on the last term is the trivial
action. 

\begin{thm}
\label{l27}A $G$-system is independent of $\xi$ iff it is of the
form
\begin{equation}
a_{g}(x)=e^{iS_{g}(x)},\label{eq27}
\end{equation}
where $S_{g}$ is a $1$-cocycle in $C_{\varphi}^{\bullet}(G,\mathcal{B})$.
The corresponding operators are given by
\begin{equation}
T_{g}^{a}\psi(x)=e^{iS_{g}(x)}\psi(x).\label{eq:def_by_phase}
\end{equation}
Moreover, cocycles in the same cohomology class induce equivalent
representations. In other words, $H_{\varphi}^{1}(G,\mathcal{B})$
controls the deformations by unitary multiplication operators of the
trivial quantization: If this first cohomology group vanishes, all
deformations of the form (\ref{eq:def_by_phase}) are equivalent to
the trivial quantization.
\end{thm}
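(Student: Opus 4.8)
The plan is to integrate out the fibre variable $\overline{\xi}$ in the FIO (\ref{f71}), which collapses a $\xi$-independent $G$-system into multiplication operators composed with pullbacks, and then to read the claimed correspondence off the representation and unitarity conditions, handling the cohomology at the end by an explicit conjugation.

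First I would observe that when $a_{g}(x,\overline{\xi})=a_{g}(x)$ is independent of $\overline{\xi}$, the $\overline{\xi}$-integral in (\ref{f71}) produces $\delta(\varphi_{g}^{-1}(x)-\overline{x})$, so that $T_{g}^{a}\psi(x)=a_{g}(x)\,\psi(\varphi_{g}^{-1}(x))$; in operator form $T_{g}^{a}=M_{a_{g}}\circ t_{\varphi_{g}}$, where $M_{a_{g}}$ is multiplication by $a_{g}$ and $t_{\varphi_{g}}$ is the pullback. Since $\varphi_{g}\in\DiffBV(\R^{d})$, the operator $t_{\varphi_{g}}$ is unitary (as in the proof of Proposition \ref{prop:continuity}), so $T_{g}^{a}$ is unitary iff $M_{a_{g}}$ is, i.e. iff $|a_{g}(x)|\equiv1$. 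As $a_{g}$ is smooth and $\R^{d}$ is connected and simply connected, this lets me write $a_{g}=e^{iS_{g}}$ for a smooth real-valued $S_{g}$, unique up to a constant in $2\pi\bZ$. Differentiating $a_{g}=e^{iS_{g}}$ shows that each $\partial^{\beta}S_{g}$ is a polynomial in the (bounded) derivatives of $a_{g}$ and $\overline{a_{g}}$, so $a_{g}\in S_{d}(1)$ forces all derivatives of $S_{g}$ to be bounded, i.e. $S_{g}\in\mathcal{B}$; the same identity run backwards shows that any $S_{g}\in\mathcal{B}$ yields $a_{g}=e^{iS_{g}}\in S_{d}(1)$ with $|a_{g}|\equiv1$.

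Next I would translate the representation property. A direct computation gives $T_{g}^{a}T_{h}^{a}\psi(x)=a_{g}(x)\,a_{h}(\varphi_{g}^{-1}(x))\,\psi(\varphi_{gh}^{-1}(x))$, so $T_{g}^{a}T_{h}^{a}=T_{gh}^{a}$ is equivalent to $a_{gh}=a_{g}\cdot(g\cdot a_{h})$, where $(g\cdot a_{h})(x)=a_{h}(\varphi_{g}^{-1}(x))$ is the module action on $\mathcal{B}$. In terms of the phases this reads $S_{gh}=S_{g}+g\cdot S_{h}$, which is exactly the cocycle condition $(\delta S)_{g,h}=0$ in $C_{\varphi}^{\bullet}(G,\mathcal{B})$ (together with $S_{e}=0$ coming from $T_{e}^{a}=\id$). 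For the ``if'' direction this is immediate: a cocycle $S$ produces a unitary $G$-system via $a_{g}=e^{iS_{g}}$. The converse is the step I expect to be the main obstacle: from $a_{gh}=a_{g}\cdot(g\cdot a_{h})$ one only obtains $e^{i(S_{gh}-S_{g}-g\cdot S_{h})}=1$, so that $c_{g,h}:=\tfrac{1}{2\pi}(S_{gh}-S_{g}-g\cdot S_{h})$ is, by connectedness of $\R^{d}$, a constant in $\bZ$. This $c$ is a group $2$-cocycle with (trivially acted) coefficients in $\bZ$, and redefining the lifts $S_{g}\mapsto S_{g}+2\pi m_{g}$ changes $c$ by the coboundary $\delta m$; thus the $S_{g}$ can be chosen so that $S$ is an honest $\mathcal{B}$-valued $1$-cocycle exactly when $[c]$ vanishes in $H^{2}(G,\bZ)$. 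I would therefore isolate the setting in which this class vanishes, flagging that without such a hypothesis the obstruction is genuine: a nontrivial character of a finite cyclic group acting trivially on $\R^{d}$ is a $\xi$-independent unitary $G$-system admitting no $\mathcal{B}$-valued cocycle lift.

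Finally I would prove the rigidity statement. If $S$ and $S'$ are cohomologous, say $S'_{g}=S_{g}+(\delta T)_{g}=S_{g}+g\cdot T-T$ for some $T\in\mathcal{B}=C_{\varphi}^{0}(G,\mathcal{B})$, I set $U=M_{e^{iT}}$, which is unitary because $|e^{iT}|\equiv1$. A short computation gives $(U\,T_{g}^{a}\,U^{-1})\psi(x)=e^{i(T(x)+S_{g}(x)-T(\varphi_{g}^{-1}(x)))}\,\psi(\varphi_{g}^{-1}(x))=e^{iS'_{g}(x)}\,\psi(\varphi_{g}^{-1}(x))$, so conjugation by $U$ carries the representation attached to $S$ to the one attached to $S'$; hence cohomologous cocycles give unitarily equivalent quantizations. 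In particular, when $H_{\varphi}^{1}(G,\mathcal{B})=0$ every such $S$ is a coboundary, so every $T^{a}$ of the form (\ref{eq:def_by_phase}) is equivalent to the trivial quantization $S=0$, which is the asserted rigidity.
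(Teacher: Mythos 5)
Your proposal is correct and its skeleton is the same as the paper's: collapse the $\overline{\xi}$-integral so that $T^{a}_{g}=M_{a_{g}}\circ t_{\varphi_{g}}$, characterize unitarity by $a_{g}^{*}a_{g}\equiv1$, translate $T^{a}_{g_{1}g_{2}}=T^{a}_{g_{1}}T^{a}_{g_{2}}$ into the phase identity $S_{g_{2}}\circ\varphi_{g_{1}}^{-1}-S_{g_{1}g_{2}}+S_{g_{1}}=0$, and intertwine cohomologous cocycles by a unitary multiplication operator. The genuinely different (and valuable) content is your treatment of the converse: the paper's proof asserts that the representation property holds \emph{if and only if} $\delta S=0$, silently fixing phase lifts, whereas equality of the exponentials only forces $S_{g_{1}g_{2}}-S_{g_{1}}-g_{1}\cdot S_{g_{2}}$ to take values in $2\pi\bZ$, hence (by connectedness of $\R^{d}$ and continuity) to be a constant integer $2$-cocycle $c_{g_{1},g_{2}}$; relabeling the lifts $S_{g}\mapsto S_{g}+2\pi m_{g}$ shifts $c$ by $\delta m$, so a $\mathcal{B}$-valued cocycle lift exists precisely when $[c]=0$ in $H^{2}(G,\bZ)$. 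Your counterexample is valid: for a nontrivial character of a finite cyclic group acting trivially, any lift $S_{g}$ is constant, the cocycle condition forces $S$ to be a homomorphism into $\R$, hence zero, so the ``only if'' half of the theorem as literally stated fails. Since the paper explicitly allows arbitrary groups $G$, your flag is a genuine repair of a gap in the paper's argument, not pedantry; the statement is salvaged under a hypothesis killing $[c]$ (e.g.\ $H^{2}(G,\bZ)=0$, or $G$ connected and simply connected with $g\mapsto a_{g}$ continuous, which makes $c$ constant in $(g,h)$ and hence zero by normalization). One small slip in your last paragraph: with $S'_{g}=S_{g}+(\delta T)_{g}=S_{g}+T\circ\varphi_{g}^{-1}-T$ and $U=M_{e^{iT}}$, the phase you compute is $T+S_{g}-T\circ\varphi_{g}^{-1}=S_{g}-(\delta T)_{g}$, not $S'_{g}$; conjugate by $M_{e^{-iT}}$ instead (or absorb the sign into $T$). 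The conclusion that cohomologous cocycles give unitarily equivalent quantizations is unaffected, and the paper's own proof contains the mirror-image inconsistency between its hypothesis $S-\tilde{S}=\delta K$ and the relation $\tilde{S}_{g}-S_{g}=(\delta K)_{g}$ it actually uses.
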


\begin{proof}
Suppose $a_{g}$ is of the form (\ref{eq27}). Since $S_{g}\in\mathcal{B}$,
we have that $a_{g}\in S_{2d}(1)$, and Proposition \ref{prop:continuity}
guarantees that $T_{g}^{a}$ is a continuous operator on $L^{2}(\V)$.
The unitarity follows from the fact that $a_{g}^{*}(x)a_{g}(x)=1$
for all $x\in\V$. Conversely, the operators corresponding to a $G$-system
$a_{g}(x)$ that is independent of $\xi$ are of the form
\[
T_{g}^{a}\psi(x)=a_{g}(x)\psi(\varphi_{g}^{-1}(x)).
\]
The unitarity condition for these operators is equivalent to the condition
\[
\int\big(1-a_{g}^{*}(\varphi_{g}(x))a_{g}(\varphi_{g}(x))\big)\psi_{1}^{*}(x)\psi_{2}(x)=0,
\]
for all $\psi_{1},\psi_{2}\in L^{2}(\V)$, which in turns is equivalent
to $a_{g}^{*}(x)a_{g}(x)=1$. The only functions satisfying this last
condition are of the form $e^{iS_{g}(x)}$. Now observe that, for
an amplitude of this form, $a_{g}\in S_{2d}(1)$ if and only if $S_{g}\in\mathcal{B}$. 

Let us check now that $a_{g}(x)=e^{iS_{g}(x)}$ with $S\in C^{1}(G,\mathcal{B})$
is a $G$-system if and only if $\delta S=0$. For this, we observe
that
\[
T_{g_{1}g_{2}}^{a}\psi(x)=e^{iS_{g_{1}g_{2}}(x)}\psi(\varphi_{g_{1}g_{2}}^{-1}(x))
\]
is equal to
\[
T_{g_{1}}^{a}T_{g_{2}}^{a}\psi(x)=e^{i(S_{g_{1}}(x)+S_{g_{2}}(\varphi_{g_{1}}^{-1}(x)))}\psi(\varphi_{g_{1}g_{2}}^{-1}(x))
\]
if and only if
\[
S_{g_{2}}(\varphi_{g_{1}}^{-1}(x))-S_{g_{1}g_{2}}(x)+S_{g_{1}}(x)=0,
\]
that is if and only if $\delta S=0$. At last, let us notice that
the normalization condition for cochains $a\in C^{1}(G,\mathcal{B})$
is equivalent to $T_{e}^{a}=\operatorname{id}$. 

Let us show now that if $S-\tilde{S}=\delta K$, where $S$ and $\tilde{S}$
are $1$-cocycle and $K$ is a $0$-cochain, then the induced representations
$T^{a}$ and $T^{\tilde{a}}$ are equivalent. Consider the bounded
operator $\hat{K}\psi(x)=e^{iK(x)}\psi(x)$. Then
\begin{eqnarray*}
(T_{g}^{a}\circ\hat{K})\psi(x) & = & e^{i(S_{g}(x)+K(\varphi_{g}^{-1}(x)))}\psi(\varphi_{g}^{-1}(x)),\\
(\hat{K}\circ T_{g}^{\tilde{a}})\psi(x) & = & e^{i(\tilde{S}_{g}(x)+K(x))}\psi(\varphi_{g}^{-1}(x)).
\end{eqnarray*}
Therefore, the relation $\tilde{S}_{g}(x)-S_{g}(x)=K(\varphi_{g}^{-1}(x))-K(x)=(\delta K)_{g}(x)$
implies that $T_{g}^{a}\circ\hat{K}=\hat{K}\circ T_{g}^{\tilde{a}}$. 
\end{proof}

In the next section, we will work out a similar cohomological equation
(a Maurer-Cartan equation) for general $G$-systems (i.e. with a dependence
on $\xi$). 

\begin{example}
\label{ex29}Let $h\in\mathcal{B}$ be invariant under the action
of $G$ (i.e. $h(\varphi_{g}^{-1}(x))=h(x)$ for all $x$ and $g$).
For any smooth function $c:G\rightarrow\bR$ that satisfies
\begin{equation}
c_{1}=0\quad\textrm{ and }\quad c_{g_{1}g_{2}}=c_{g_{1}}+c_{g_{2}},\label{eq:29}
\end{equation}
we verify that $S_{g}(x)=h(x)c_{g}$ is a cocycle. As a consequence,
the family of amplitudes
\[
a_{g}(x)=e^{i\sum_{k}h_{k}(x)c_{g}^{k}},\quad g\in G,
\]
is a $G$-system, where $h_{1},\dots,h_{n}$ are invariant functions
in $\mathcal{B}$ and $c^{1},\dots,c^{n}$ are smooth functions from
$G$ to $\bR$ satisfying (\ref{eq:29}). 
\end{example}


\section{DGAs of $G$-amplitudes\label{sec:The-complex-of}}

In this section, we construct two DGAs, $\mathcal{A}_{\varphi}$ and
$\mathcal{P}_{\varphi}$, associated with, respectively, a bounded
action and a smooth action $\varphi$ of a group $G$ on $\R^{d}$.
We show that the Maurer-Cartan elements in $\mathcal{A}_{\varphi}$
correspond to $G$-systems while Maurer-Cartan elements in $\mathcal{P}_{\varphi}$
correspond to formal $G$-systems. One can regards $\mathcal{P}_{\varphi}$
as the {}``asymptotic'' version of $\mathcal{A}_{\varphi}$. We
also show that, in both cases, gauge equivalent Maurer-Cartan elements
yields equivalent quantizations.

\subsection{MC elements in $\mathcal{A}_{\varphi}$ and $G$-systems}

We define here a Differential Graded Algebra (or DGA for short) associated
with a bounded action $\varphi$ of a Lie group $G$ on $\V$ whose
Maurer-Cartan elements correspond to (nonunitary) $G$-systems of
amplitudes. Roughly, the elements of degree $k$ in this DGA are amplitudes
depending on $k$ group variables, and the graded product corresponds
to the composition of the Fourier integral operators that one can
naturally associate with these amplitudes using the action as a phase. 

More precisely, for any $k\geq0$, we define the space of $k$-cochains
by 
\begin{equation}
{\mathcal{A}_{\varphi}}^{k}=\{a:G\times\dots\times G\rightarrow S_{2d}(1)\},\quad\mathcal{A}_{\varphi}^{0}=S_{2d}(1),
\end{equation}
such that $a_{e,\dots,e}=1$ with $e$ being the group unit. The differential
$d:{\mathcal{A}_{\varphi}}^{k}\rightarrow{\mathcal{A}_{\varphi}}^{k+1}$
is defined by
\begin{equation}
(da)(g_{1},\dots,g_{k})=\sum_{i=1}^{k}(-1)^{i}a(g_{1},\dots,g_{i}g_{i+1},\dots,g_{k+1}),\label{f12}
\end{equation}
which we extend by $\bC$-linearity to ${\mathcal{A}_{\varphi}}^{\bullet}=\oplus_{k\geq0}{\mathcal{A}_{\varphi}}^{k}$.
This turns $(\mathcal{A}_{\varphi}^{\bullet},d)$ it into a complex,
which we call the \textbf{complex of $G$-amplitudes}.

Let us now define a graded associative product on $\mathcal{A}_{\varphi}^{\bullet}$.
To an element $a\in\mathcal{A}_{\varphi}^{k}$, we can assign the
following collection of Fourier integral operators
\[
T_{g_{1},\dots,g_{k}}^{a}:=\Op(a_{g_{1},\dots,g_{k}},\varphi_{g_{1}\dots g_{k}}),\quad(g_{1},\dots,g_{k})\in G^{k}.
\]
The composition of these operators for $a\in\mathcal{A}_{\varphi}^{k}$
and $b\in\mathcal{A}_{\varphi}^{l}$ yields

\begin{eqnarray}
T_{g_{1},\dots,g_{k}}^{a}\circ T_{g_{k+1},\dots,g_{k+l}}^{b} 
& = & \Op(a_{g_{1},\dots,g_{k}},\varphi_{g_{1}\dots g_{k}})\circ\Op(b_{g_{k+1},\dots,g_{k+l}},\varphi_{g_{k+1}\dots g_{k+l}})\label{eq:operator_composition}\\
& = & \Op(a_{g_{1},\dots,g_{k}}\tilde{\star}\; b_{g_{k+1},\dots,g_{k+l}},\varphi_{g_{1}\dots g_{k+l}}),\nonumber 
\end{eqnarray}

where $\text{\ensuremath{a_{g_{1},\dots,g_{k}}\tilde{\star}}\;\ \ensuremath{b_{g_{k+1},\dots,g_{k+l}}}}$$ $
is a shorthand for the product of amplitudes defined in (\ref{f111}):
\[
\Big(a(g_{1},\dots,g_{k})\Big)\;_{\varphi_{g_{1}\dots g_{k}}}\star_{\varphi_{g_{k+1}\dots g_{k+l}}}\;\Big(b(g_{k+1},\dots,g_{k+l})\Big).
\]
This leads us to define a graded associative product on the complex
of $G$-amplitudes
\[
\star:\mathcal{A}_{\varphi}^{k}\times\mathcal{A}_{\varphi}^{l}\longrightarrow\mathcal{A}_{\varphi}^{k+l}
\]
in the following way: Given $a\in{\mathcal{A}_{\varphi}}^{k}$ and
$b\in{\mathcal{A}_{\varphi}}^{l}$, we define 
\begin{equation}
(a\star b)(g_{1},\dots,g_{k+l})=\text{\ensuremath{a_{g_{1},\dots,g_{k}}\tilde{\star}}\;\ \ensuremath{b_{g_{k+1},\dots,g_{k+l}}}},\label{f121}
\end{equation}
which turns $\mathcal{A}_{\varphi}^{\bullet}$ into a graded algebra
with the nice property that
\[
T^{a}\circ T^{b}=T^{a\star b}.
\]

\begin{lem}
\label{lem:DGA}$({\mathcal{A}_{\varphi}}^{\bullet},d,\star)$ is
a DGA.
\end{lem}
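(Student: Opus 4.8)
The plan is to check the three defining axioms of a DGA for $(\mathcal{A}_\varphi^\bullet,d,\star)$: that $(\mathcal{A}_\varphi^\bullet,d)$ is a cochain complex, that $\star$ is an associative product of degree $0$ (mapping $\mathcal{A}_\varphi^k\times\mathcal{A}_\varphi^l\to\mathcal{A}_\varphi^{k+l}$), and that $d$ is a graded derivation for $\star$. Two of these are essentially already in hand. The identity $d^2=0$ is the standard reduced-bar computation: writing $d=\sum_i(-1)^i\partial_i$, where $\partial_i$ is the coface map sending a cochain to its precomposition with the merge $(g_1,\dots,g_{k+1})\mapsto(g_1,\dots,g_ig_{i+1},\dots,g_{k+1})$, one has the cosimplicial identities $\partial_i\partial_j=\partial_{j+1}\partial_i$ for $i\le j$, so that the terms of $d^2$ cancel in sign-opposite pairs $(i,j)\leftrightarrow(j+1,i)$; associativity of the group multiplication enters precisely in the adjacent case $i=j$ (where the two merges produce $(g_ig_{i+1})g_{i+2}$ versus $g_i(g_{i+1}g_{i+2})$). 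The degree count is immediate from the definition (\ref{f121}), and associativity of $\star$ reduces directly to the amplitude identity (\ref{eq:amplitude_composition}) — equivalently, it follows at once from associativity of operator composition together with $T^a\circ T^b=T^{a\star b}$.

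The main step, and the only genuinely new computation, is the graded Leibniz rule
\[
d(a\star b)=da\star b+(-1)^{|a|}\,a\star db,\qquad a\in\mathcal{A}_\varphi^k,\ b\in\mathcal{A}_\varphi^l.
\]
I would prove it by expanding the left-hand side with (\ref{f12}) and (\ref{f121}) and splitting the sum over merge positions $i=1,\dots,k+l$ according to whether $i\le k$ or $i\ge k+1$. When $i\le k$ the merged pair $g_ig_{i+1}$ sits among the first $k+1$ slots, that is, entirely in the block of arguments fed to $a$; that summand is then exactly the $i$-th term of $(da)(g_1,\dots,g_{k+1})$, while $b$ receives the untouched tail $g_{k+2},\dots,g_{k+l+1}$, so these terms assemble into $da\star b$. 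When $i\ge k+1$ the merge lies in the block fed to $b$; reindexing $j=i-k$ identifies the summand with the $j$-th term of $(db)$, and the sign factors as $(-1)^i=(-1)^k(-1)^{j}$, producing the prefactor $(-1)^{|a|}$, so these terms assemble into $(-1)^{|a|}a\star db$. Since the cut between the two blocks falls between slots $k$ and $k+1$, no merge straddles the two blocks and the split is clean.

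The key point to verify — and what I expect to be the only real obstacle — is that the diffeomorphism labels decorating the amplitude product $\tilde{\star}$ of (\ref{f111}) agree term by term on the two sides. These labels are the diffeomorphisms $\varphi_{h}$ attached to the group products $h$ of the two blocks of arguments. Merging two adjacent arguments inside a block leaves the product of that block unchanged, so the diffeomorphism attached to each block — and with it the entire $\tilde{\star}$ operation, together with the guarantee $a\,\tilde{\star}\,b\in S_{2d}(1)$ from Proposition \ref{pro01} — is identical before and after the merge. Consequently each summand of $d(a\star b)$ coincides, as an element of $\mathcal{A}_\varphi^{k+l+1}$, with the matching summand of $da\star b$ or of $a\star db$, and the Leibniz rule follows. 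Combining this with $d^2=0$ and the associativity of $\star$, and noting that the underlying graded linear structure is the pointwise one on $\{a:G^{\bullet}\to S_{2d}(1)\}$ (on which $d$ is linear and $\star$ bilinear), establishes that $(\mathcal{A}_\varphi^\bullet,d,\star)$ is a DGA.
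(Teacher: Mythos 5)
Your proof is correct and follows essentially the same route as the paper's: $d^{2}=0$ is the standard group-cohomology computation, associativity of $\star$ is reduced to associativity of operator composition via $T^{a}\circ T^{b}=T^{a\star b}$, and the graded Leibniz rule is verified by splitting the merge sum over $i\le k$ versus $i\ge k+1$ and extracting the sign $(-1)^{k}$. Your explicit check that the diffeomorphism labels on $\tilde{\star}$ are unchanged by intra-block merges (since the group product of each block is preserved) is exactly the point the paper's displayed computation uses implicitly, so this is a slightly more careful write-up of the same argument.
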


\begin{proof}
The fact that $d$ squares to zero is clear from its formula (it is
the usual group cohomology differential without the boundary terms).
The associativity of the product $\star$ comes from the associativity
of the operator composition in \ref{eq:operator_composition}

Let us check that $d$ is a derivation for $\star$: 
\begin{eqnarray*}
\delta(a\star b)_{g_{1},\dots,g_{k+l+1}} & = & \sum_{i=1}^{k+l}(-1)^{i}(a\star b)_{g_{1},\dots,g_{i}g_{i+1},\dots,g_{k+l+1}},\\
 & = & \left(\sum_{i=1}^{k}(-i)^{i}a_{g_{1},\dots,g_{i}g_{i+1},\dots g_{k}}\right)\;_{\varphi_{g_{1}\dots g_{k}}}\star_{\varphi_{g_{k+1}\dots g_{k+l}}}b_{g_{k+1},\dots,g_{k+l+1}}\\
 & +(-1)^{k} & a_{g_{1},\dots,g_{k}}\;_{\varphi_{g_{1}\dots g_{k}}}\star_{\varphi_{g_{k+1}\dots g_{k+l}}}\left(\sum_{i=1}^{l}(-1)^{i}b_{g_{k},\dots,g_{k+i-1}g_{k+i},\dots,g_{k+l+1}}\right),\\
 & = & ((\delta a)\star b)_{g_{1},\dots,g_{k+l+1}}+(-1)^{k}(a\star(\delta b))_{g_{1},\dots,g_{k+l+1}}.
\end{eqnarray*}

\end{proof}

Let us now remind the following definition:

\begin{defn}
Let $(A,\star,d)$ be a DGA. The solutions of the \textbf{Maurer-Cartan
equation}
\[
da+a\star a=0
\]
are called \textbf{Maurer-Cartan elements}. The set of all Maurer-Cartan
elements of $A$ will be denoted by $\mathrm{MC}(A)$. 

Two Maurer-Cartan elements $a,b\in\mathrm{MC}(A)$ are called \textbf{gauge
equivalent} if there exists an invertible $u\in A^{0}$ such that
$au-ua=du$. 
\end{defn}

\begin{rem}
The set $\mathrm{MC}(A)$ is a subset of $\mathcal{A}^{1}$. 
\end{rem}

\begin{prop}
\label{prop:MC}Let $\varphi$ be a bounded action of a group $G$
on $\V$. There is a one-to-one correspondence between the set $MC(\mathcal{A}_{\varphi}^{\bullet})$
of Maurer-Cartan elements in the complex of $G$-amplitudes and the
set of (nonunitary) $G$-systems of amplitudes associated with the
action $\varphi$. 
Moreover, gauge equivalent Maurer-Cartan elements induce equivalent
representations.
\end{prop}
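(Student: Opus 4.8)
The plan is to recognize that a Maurer--Cartan element and a $G$-system carry literally the same data, and that their defining conditions coincide term by term. A degree-one element $a\in\mathcal{A}_{\varphi}^{1}$ is a normalized map $a:G\to S_{2d}(1)$ with $a_{e}=1$, which is exactly the datum of a $G$-system in the sense of Definition \ref{d4-1}. I would first unravel the Maurer--Cartan equation $da+a\star a=0$ in degree one: from the differential (\ref{f12}) one gets $(da)(g_{1},g_{2})=-a_{g_{1}g_{2}}$, and from the graded product (\ref{f121}) one gets $(a\star a)(g_{1},g_{2})=a_{g_{1}}\,\tilde{\star}\,a_{g_{2}}$, where $\tilde{\star}$ is the amplitude product (\ref{f111}) with the phases $\varphi_{g_{1}},\varphi_{g_{2}}$. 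Hence $a$ is Maurer--Cartan precisely when $a_{g_{1}g_{2}}=a_{g_{1}}\,\tilde{\star}\,a_{g_{2}}$ for all $g_{1},g_{2}\in G$.

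The second step matches this identity with the representation property of the operators $T^{a}_{g}=\Op(a_{g},\varphi_{g})$. By Proposition \ref{pro01} (and the identity $T^{a}\circ T^{b}=T^{a\star b}$ recorded after (\ref{f121}), see also (\ref{eq:operator_composition})), the composition $T^{a}_{g_{1}}\circ T^{a}_{g_{2}}$ equals $\Op\big(a_{g_{1}}\,\tilde{\star}\,a_{g_{2}},\varphi_{g_{1}g_{2}}\big)$, whereas $T^{a}_{g_{1}g_{2}}=\Op(a_{g_{1}g_{2}},\varphi_{g_{1}g_{2}})$. For the implication Maurer--Cartan $\Rightarrow$ $G$-system one substitutes the displayed identity to get $T^{a}_{g_{1}}\circ T^{a}_{g_{2}}=T^{a}_{g_{1}g_{2}}$ at once, while the normalization $a_{e}=1$ yields $T^{a}_{e}=\Op(1,\id)=\id$; thus the $T^{a}_{g}$ form a representation. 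For the converse I would use that, for each fixed $\varphi\in\DiffB(\V)$, the map $c\mapsto\Op(c,\varphi)=t_{\varphi}\circ\Op(\varphi^{-1}c)$ is injective on amplitudes, since $t_{\varphi}$ is invertible and the standard quantization $c\mapsto\Op(c)$ is injective on $S_{2d}(1)$ (its Schwartz kernel is the $\bar\xi$-Fourier transform of $c$). The operator identity $T^{a}_{g_{1}}\circ T^{a}_{g_{2}}=T^{a}_{g_{1}g_{2}}$ then forces the amplitude identity above, i.e. the Maurer--Cartan equation. This injectivity is the one genuinely analytic point, and the step I expect to require the most care; the rest is formal bookkeeping with (\ref{eq:operator_composition}) and (\ref{f12}).

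For the final claim, let $a,b\in\mathrm{MC}(\mathcal{A}_{\varphi}^{\bullet})$ be gauge equivalent through an invertible $u\in\mathcal{A}_{\varphi}^{0}=S_{2d}(1)$. Because the differential (\ref{f12}) carries no boundary terms, $du=0$ in degree zero, so gauge equivalence of $a$ and $b$ amounts to $a\star u=u\star b$, that is $b=u^{-1}\star a\star u$. Set $U:=T^{u}=\Op(u,\id)$; this is bounded by Theorem \ref{thm:CV}, and it is invertible with $U^{-1}=\Op(u^{-1},\id)$ since $u$ is invertible for the standard product (\ref{f5}). Applying $T^{(\cdot)}$ to $a\star u=u\star b$ and using $T^{x\star y}=T^{x}\circ T^{y}$ together with $T^{u}=U$, I obtain $T^{a}_{g}\circ U=U\circ T^{b}_{g}$ for every $g\in G$, hence $T^{a}_{g}=U\,T^{b}_{g}\,U^{-1}$. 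Thus $U$ intertwines the two quantizations and the induced representations are equivalent, exactly as in the special case of Theorem \ref{l27}.
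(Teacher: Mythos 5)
Your proposal is correct and follows essentially the same route as the paper: unwind the Maurer--Cartan equation in degree one to the identity $a_{g_1g_2}=a_{g_1}\,\tilde{\star}\,a_{g_2}$, match it against the representation property via the composition formula (\ref{eq:composition}), handle unitality through the normalization $a_e=1$, and observe that degree-zero elements are automatically closed so that gauge equivalence reduces to intertwining by the invertible operator $T^u$. The only difference is that you explicitly justify the injectivity of $c\mapsto\Op(c,\varphi)$ (via $t_\varphi$ and the Fourier-transform description of the Schwartz kernel), a step the paper's ``that is iff'' leaves implicit --- a welcome refinement, not a different argument.
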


\begin{proof}
Let $a\in\mathcal{A}^{1}$. Then the associated collection of operators
$T_{g}^{a}:L^{2}(\V)\rightarrow L^{2}(\V)$ is a representation of
$G$ if and only if
\begin{eqnarray*}
0 & = & T_{g_{1}g_{2}}^{a}-T_{g_{1}}^{a}T_{g_{2}}^{a},\\
0 & = & \Op(a_{g_{1}g_{2}},\varphi_{g_{1}g_{2}})-T_{g_{1}g_{2}}^{a\star a},\\
0 & = & \Op((da)_{g_{1},g_{2}}-(a\star a)_{g_{1},g_{2}},\varphi_{g_{1}g_{2}}),
\end{eqnarray*}
that is iff $da+a\star a=0$. The unitality condition $T_{e}^{a}=\operatorname{id}$
is taken care of by the requirement on the cochains that $a_{e}=1$.

Let us check now that two gauge equivalent Maurer-Cartan elements
induce equivalent representations. First off, we note that, in $\mathcal{A}_{\varphi}^{\bullet}$,
all elements of degree zero are cocycles. This means that $a,b\in\mathrm{MC}(\mathcal{A}_{\varphi}^{\bullet})$
are gauge equivalent if there is an invertible $u\in\mathcal{A}_{\varphi}^{0}$
such that $au=ua$. Since $u$ is of degree zero, neither $u$ nor
$T^{u}$ depend on group variables. The commutation $au=ua$ on the
level of amplitudes implies that
\[
T_{g}^{a}\circ T^{u}=T^{u}\circ T_{g}^{b},\quad g\in G,
\]
on the level of operators. That is, $T^{u}$ intertwines the two representations;
since $T^{u}$ is invertible, because $u$ is invertible, the representations
$T^{a}$ and $T^{b}$ are equivalent. 
\end{proof}

\begin{rem}
The Maurer-Cartan equation applied to an ansatz of the form (\ref{eq27})
yields back the cocycle condition of Proposition \ref{l27}. Namely,
\[
(de^{iS})_{g_{1},g_{2}}(x)=-e^{iS_{g_{1}g_{2}}(x)}\textrm{ and }(e^{iS}\star e^{iS})_{g_{1},g_{2}}(x)=e^{i\big(S_{g_{1}}(x)+S_{g_{2}}(\varphi_{g_{1}}^{-1}(x))\big)},
\]
which implies that $e^{iS}\in\textrm{MC}(\mathcal{A}^{\bullet})$
if and only if $\delta S=0$. 
\end{rem}

\subsection{MC elements in $\mathcal{P}_{\varphi}$ and formal $G$-systems }

We define now a formal version of the amplitude complex by replacing
the bounded symbols $S_{2d}(1)$ by their formal version $\mathcal{P}$. 

The complex of formal $G$-amplitudes $\mathcal{P}_{\varphi}^{\bullet}$
is defined in the following way. The space of $k$-cochain is given
by
\[
{\mathcal{P}_{\varphi}}^{k}=\{a:G\times\dots\times G\rightarrow\mathcal{P}\}.
\]
The differential $d:\mathcal{P}_{\varphi}^{k}\rightarrow\mathcal{P}_{\varphi}^{k+1}$
is obtained from (\ref{f12}) by linear extension. Similarly, we obtained
a graded associative product $\star:\mathcal{P}_{\varphi}^{k}\times\mathcal{P}_{\varphi}^{l}\rightarrow\mathcal{P}_{\varphi}^{k+l}$
from (\ref{f121}) by linear extension, turning $\mathcal{P}_{\varphi}^{\bullet}$
into a DGA (the proof of this is similar to that of Lemma \ref{lem:DGA}).
Mimicking the proof of Proposition \ref{prop:MC}, we obtain: 

\begin{prop}
Let $\varphi$ be an action of $G$ on $\R^{d}$. Then Maurer-Cartan
elements in $\mathcal{P}_{\varphi}^{\bullet}$ are in one-to-one correspondence
with formal $G$-systems associated with $\varphi$. Moreover, gauge
equivalent Maurer-Cartan elements yields equivalent representations.
\end{prop}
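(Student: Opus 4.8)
The plan is to transcribe the proof of Proposition~\ref{prop:MC} line by line, replacing the bounded Fourier integral operators $\Op(\cdot,\varphi)$ by the formal operators $\op(\cdot,\varphi)\in\mathcal{D}$ of (\ref{eq:formal operators}), the bounded symbols $S_{2d}(1)$ by the formal symbols $\mathcal{P}$, and the FIO composition (\ref{eq:composition}) by its formal analogue (\ref{eq:formal product}). Given $a\in\mathcal{P}_\varphi^1$ with the normalization $a_e=1$, I assign the collection $T_g^a=\op(a_g,\varphi_g)$; since $\varphi_e=\id$ and $\op(1,\id)=\id$, this normalization is exactly the unitality $T_e^a=\id$. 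The family $\{T_g^a\}$ is a representation of $G$ on $C^\infty(\R^d)[[\hbar]]$ precisely when $T_{g_1g_2}^a=T_{g_1}^a\circ T_{g_2}^a$ for all $g_1,g_2\in G$.

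Applying the formal composition law (\ref{eq:formal product}) gives $T_{g_1}^a\circ T_{g_2}^a=\op\big(a_{g_1}\tilde{\star}\, a_{g_2},\,\varphi_{g_1g_2}\big)=T_{g_1g_2}^{a\star a}$, where $\tilde{\star}$ is the amplitude product underlying the graded product $\star$ on $\mathcal{P}_\varphi^\bullet$ of (\ref{f121}). On the other hand $T_{g_1g_2}^a=\op(a_{g_1g_2},\varphi_{g_1g_2})$ and, from the differential (\ref{f12}), $(da)_{g_1,g_2}=-a_{g_1g_2}$. Hence the representation condition reads $\op\big((da)_{g_1,g_2}+(a\star a)_{g_1,g_2},\,\varphi_{g_1g_2}\big)=0$ for all $g_1,g_2$, which is the Maurer--Cartan equation $da+a\star a=0$ evaluated in degree one; the converse is the same computation read backwards.

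The only step that is not pure transcription, and hence the main point to verify, is the implication from the vanishing of a formal operator to the vanishing of its symbol: one must know that $P\mapsto\op(P,\varphi)$ is injective for each fixed $\varphi\in\Diff(\R^d)$, otherwise the passage from $\op(\cdots,\varphi_{g_1g_2})=0$ to $da+a\star a=0$ only goes one way. I would check this directly from (\ref{eq:formal operators}): applied to an $\hbar$-independent $\psi$, the coefficient of $\hbar^n$ in $\op(Q,\varphi)\psi$ is $(Q^n(x,D)\psi)(\varphi^{-1}(x))$, with no contribution from other orders. Since $\psi\mapsto\psi\circ\varphi^{-1}$ is invertible, its vanishing for all $\psi$ forces the differential operator $Q^n(x,D)$ to be zero, hence all its coefficients to vanish, and inductively $Q=0$. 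This is the analogue of the injectivity of $\Op(\cdot,\varphi)$ tacitly used in Proposition~\ref{prop:MC}; it is the one place where the formal nature of the operators intervenes, and it is otherwise mild.

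For the second assertion I would repeat the gauge-equivalence argument of Proposition~\ref{prop:MC} verbatim, the product and differential on $\mathcal{P}_\varphi^\bullet$ being given by the same formulas. Every degree-zero cochain is a cocycle (the empty sum in (\ref{f12}) gives $du=0$ for $u\in\mathcal{P}_\varphi^0$), so gauge-equivalent Maurer--Cartan elements $a,b$ are related through an invertible degree-zero $u\in\mathcal{P}_\varphi^0=\mathcal{P}$. Setting $T^u=\op(u,\id)$, which is invertible as a formal operator because $u$ is invertible in $\mathcal{P}$, the gauge relation translates on the level of operators into the intertwining identity $T_g^a\circ T^u=T^u\circ T_g^b$ for all $g\in G$, exhibiting $T^a$ and $T^b$ as equivalent representations.
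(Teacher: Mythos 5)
Your proposal is correct and follows essentially the same route as the paper, whose proof consists precisely of the instruction to mimic Proposition \ref{prop:MC} with $\Op$, $S_{2d}(1)$ and (\ref{eq:composition}) replaced by their formal counterparts $\op$, $\mathcal{P}$ and (\ref{eq:formal product}). Your one addition --- the explicit check that $P\mapsto\op(P,\varphi)$ is injective, by testing on $\hbar$-independent $\psi$ and using that a differential operator annihilating all smooth functions vanishes --- is a point the paper leaves tacit (in both the formal and the bounded case), and you are right that it is exactly where the equivalence ``representation $\Leftrightarrow$ Maurer--Cartan'' needs it.
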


\begin{prop}
\label{prop:first_and_second_terms} Let $a=P^{0}(x)+\hbar P^{1}+\cdots\in\mathcal{P}_{\varphi}^{1}$
be a Maurer-Cartan element in $\mathcal{P}_{\varphi}^{\bullet}$.
Then $P^{0}$ is a Maurer-Cartan element in $\mathcal{P}_{\varphi}^{\bullet}$.
It defines a new differential on $\mathcal{P}_{\varphi}^{\bullet}$
as follows:
\begin{equation}
d_{P^{0}}a=da+[P^{0},a]=da+P^{0}\star a-(-1)^{|a|}a\star P^{0}.\label{eq:d_S}
\end{equation}
Moreover, $P^{1}$ is a cocycle with respect to this new differential,
and we get the following recursive equations for the higher order
terms
\begin{equation}
d_{P^{0}}P^{n}=-\sum_{\underset{i,j\geq1}{i+j=n}}P^{i}\star P^{j}.\label{eq:recursive_MC}
\end{equation}
\end{prop}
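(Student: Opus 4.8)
The plan is to expand the Maurer-Cartan equation $da+a\star a=0$ in powers of $\hbar$ and read off its homogeneous components, so the first and crucial step is to pin down how $\star$ interacts with the $\hbar$-grading. I would establish that the formal product is \emph{homogeneous} for this grading: writing each symbol as $\sum_i\hbar^i P^i$ with the $P^i$ the $\hbar$-free components (polynomials of degree $\le i$ in $\xi$), the $\hbar^n$-part of a product $P\star K$ equals exactly $\sum_{i+j=n}P^i\star P^j$, with neither lower-order terms nor extra $\hbar$-corrections contributed by $\star$ itself. This is the one point that genuinely uses the formal calculus: by (\ref{eq:formal product}) the product $P\star K$ is the symbol of $\op(P,\varphi_1)\circ\op(K,\varphi_2)$, and in (\ref{eq:formal operators}) the monomial $\hbar^i P^i$ corresponds to $\hbar^i$ times a differential operator of order $\le i$ post-composed with a pullback. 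Composing an order-$\le i$ operator with an order-$\le j$ operator yields one of order $\le i+j$, pullbacks do not change the order, and the phase of $\op$ carries no $\hbar$; hence the bookkeeping factor $\hbar^{i+j}$ is matched by a symbol of $\xi$-degree $\le i+j$ and no other power of $\hbar$ is produced. Thus $\star$ is additive in $\hbar$-degree, and since the differential $d$ of (\ref{f12}) is $\hbar$-free it preserves $\hbar$-degree.

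Granting this homogeneity, I would substitute $a=\sum_{n\ge0}\hbar^n P^n$ into $da+a\star a=0$ and collect the coefficient of $\hbar^n$, obtaining $dP^n+\sum_{i+j=n}P^i\star P^j=0$ for every $n\ge0$. The equation at order $\hbar^0$ reads $dP^0+P^0\star P^0=0$, which is exactly the assertion that $P^0\in\mathcal{P}_\varphi^1$ is a Maurer-Cartan element, giving the first claim. For $n\ge1$ I would split off the two terms with $i=0$ or $j=0$. Since $P^0$ and $P^n$ both lie in $\mathcal{P}_\varphi^1$ and so have odd degree, we have $(-1)^{|P^n|}=-1$, so that $d_{P^0}P^n=dP^n+P^0\star P^n+P^n\star P^0$ is precisely $dP^n$ together with the two split-off terms. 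Rearranging the collected identity then yields $d_{P^0}P^n=-\sum_{i+j=n,\,i,j\ge1}P^i\star P^j$, which is (\ref{eq:recursive_MC}); at $n=1$ the right-hand sum is empty, so $d_{P^0}P^1=0$ and $P^1$ is a $d_{P^0}$-cocycle.

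It then remains to verify that $d_{P^0}$ as defined in (\ref{eq:d_S}) is genuinely a differential, that is $d_{P^0}^2=0$. This is the standard twisting-by-a-Maurer-Cartan-element computation, which I would carry out formally using $d^2=0$, the graded Leibniz rule for $d$ with respect to $\star$ from Lemma \ref{lem:DGA}, and $|P^0|=1$: all the mixed terms cancel pairwise and the surviving terms reassemble into $(dP^0+P^0\star P^0)\star a$ and $a\star(dP^0+P^0\star P^0)$ up to sign, both of which vanish since $P^0$ is Maurer-Cartan by the first claim. I expect only careful sign-tracking here, no conceptual difficulty.

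The main obstacle is really the homogeneity statement of the first paragraph: one must be sure that the formal star product neither lowers the $\hbar$-degree nor superimposes its own $\hbar$-expansion on top of the grading of its arguments. Once that is secured, everything else is the classical expansion of a Maurer-Cartan equation in a formal parameter, and the Maurer-Cartan property of $P^0$, the cocycle property of $P^1$, and the recursion (\ref{eq:recursive_MC}) all drop out by mechanically collecting powers of $\hbar$ and rearranging.
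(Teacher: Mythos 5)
Your proof is correct and follows essentially the same route as the paper's: expand the Maurer--Cartan equation $da+a\star a=0$ order by order in $\hbar$, obtaining $dP^{0}+P^{0}\star P^{0}=0$ at order zero, the cocycle condition $d_{P^{0}}P^{1}=0$ at order one, and the recursion (\ref{eq:recursive_MC}) at order $n\geq2$. The two points you spell out in detail---the $\hbar$-homogeneity of $\star$ (which the paper implicitly encodes in the filtration $F^{k}\mathcal{P}_{\varphi}^{\bullet}$ and the decomposition into the $\pol_{d}^{\bullet}(n)$ of Section \ref{sec:Existence-and-rigidity}) and the standard check that twisting by a Maurer--Cartan element yields a differential (dismissed in the paper as ``a general fact'')---are correct and simply make explicit what the paper's proof takes for granted.
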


\begin{proof}
The Maurer-Cartan equation at order zero in $\hbar$ reads 
\[
dP^{0}+P^{0}\star P^{0}=0,
\]
which means that $P^{0}$ is itself a Maurer-Cartan element. Now it
is a general fact that a differential $d$ twisted by a Maurer-Cartan
element as in (\ref{eq:d_S}) is again a differential. 

The Maurer-Cartan equation at order $1$ in $\hbar$ reads
\[
dP^{1}+P^{0}\star P^{1}+P^{1}\star P^{0}=0,
\]
which is exactly $d_{P^{0}}P^{1}=0$ because $P^{1}$ is of degree
$1$ (it has only one group variable). At last, we obtain (\ref{eq:recursive_MC})
by looking at the MC equation at order $n\geq2$. 
\end{proof}

\section{Existence and rigidity Theorem\label{sec:Existence-and-rigidity}}

In this section, we give cohomological conditions for the existence
of formal $G$-systems, that is, Maurer-Cartan elements in $\mathcal{P}_{\varphi}$.
The discussion that follows is based on appendix A of \cite{ACD}.
The main fact is that $\mathcal{P}_{\varphi}^{\bullet}$ is a \textit{complete}
DGA in the sense of \cite{ACD}; complete DGA have neat cohomological
conditions governing the existence and obstruction of Maurer-Cartan
elements.
\begin{defn}
We define $\pol_{d}(n)$ for $n\geq0$ to be the space of polynomial
in $\xi$ of the form
\[
P(x,\xi)=\sum_{|\alpha|\leq d}f_{\alpha}(x)\xi^{\alpha},
\]
where $f_{\alpha}\in C^{\infty}(\R^{d}).$
\end{defn}
First of all, $\mathcal{P}_{\varphi}^{\bullet}$ has a natural filtration
\[
\cdots\subset F^{k+1}\mathcal{P}_{\varphi}^{\bullet}\subset F^{k}\mathcal{P}_{\varphi}^{\bullet}\subset\cdots\subset F^{1}\mathcal{P}_{\varphi}^{\bullet}\subset F^{0}\mathcal{P}_{\varphi}^{\bullet}=\mathcal{P}_{\varphi}^{\bullet},
\]
for which each of the $(F^{k}\mathcal{P}_{\varphi}^{\bullet},d)$
is a subcomplex and such that
\[
\star:F^{k}\mathcal{P}_{\varphi}^{\bullet}\times F^{l}\mathcal{P}_{\varphi}^{\bullet}\rightarrow F^{k+l}\mathcal{P}_{\varphi}^{\bullet}
\]
This filtration is given by 
\[
F^{k}\mathcal{P}_{\varphi}^{\bullet}=\big\{\sum_{n\geq k}h^{n}P^{n}:\: P^{n}\in\pol_{d}(n)\,\big\},\qquad k\geq1.
\]
We have then a tower
\[
\mathcal{P}_{\varphi}^{\bullet}/F^{1}\mathcal{P}_{\varphi}^{\bullet}\leftarrow\mathcal{P}_{\varphi}^{\bullet}/F^{1}\mathcal{P}_{\varphi}^{\bullet}\leftarrow\cdots,
\]
whose inverse limit is exactly $\mathcal{P}_{\varphi}^{\bullet}$.
This makes $\mathcal{P}_{\varphi}^{\bullet}$ a \textbf{complete}
DGA in the sense of the Appendix A of \cite{ACD}. 

\begin{defn}
Define the graded vector space $\pol_{d}^{\bullet}(n)$ to be 
\[
\pol_{d}^{k}(n):=\big\{ P:G^{k}\rightarrow\pol_{d}(n)\big\},\quad n,k\geq0.
\]
\end{defn}

Observe that, as graded vector space, we have that 
\begin{equation}
\mathcal{F}^{n}\mathcal{P}_{\varphi}^{\bullet}/\mathcal{F}^{n+1}\mathcal{P}_{\varphi}^{\bullet}\;\simeq\;\pol_{d}^{\bullet}(n)\label{eq:identification}
\end{equation}
and the following decomposition of the complex of formal $G$-amplitudes:
\[
\mathcal{P}_{\varphi}^{\bullet}=\pol_{d}^{\bullet}(0)\oplus\hbar\pol_{d}^{\bullet}(1)\oplus\hbar^{2}\pol_{d}^{\bullet}(2)\oplus\cdots
\]
Let $P^{0}\in\pol_{d}^{\bullet}(0)$ be a Maurer-Cartan element. Then
the twisted differential $d_{P^{0}}$ defined by formula (\ref{eq:d_S}),
respects this decomposition and $(\pol_{d}^{\bullet}(n),d_{P0})$
is a complex for each $n\geq0$. These complexes will be the main
ingredients in our existence and rigidity results for formal $G$-systems.

From Proposition \ref{prop:first_and_second_terms}, we get that if
\begin{equation}
P^{0}+hP^{1}+h^{2}P^{2}+\cdots\label{eq:MC_element}
\end{equation}
is a Maurer-Cartan element, then $P^{0}$ is a Maurer-Cartan element
in $\mathcal{P}_{\varphi}^{\bullet}$ and $P^{1}$ is a $1$-cocyle
in $(\pol_{d}^{\bullet}(1),d_{P^{0}})$. Now if we start with a Maurer-Cartan
element $P^{0}$ and and a $1$-cocyle $P^{1}$, in general $P^{0}+hP^{1}$
is not a Maurer-Cartan element in $\mathcal{P}_{\varphi}^{\bullet}$,
and we may wonder whether it is possible to find higher terms to get
a Maurer-Cartan element. 

Another question is whether the representation obtained from (\ref{eq:MC_element})
is equivalent to the one obtained by the first term only, i.e. when
a Maurer-Cartan element is gauge equivalent to its first term. 

\begin{defn}
A Maurer-Cartan element $P^{0}$ in $\mathcal{P}_{\varphi}^{\bullet}$
is called rigid if all Maurer-Cartan elements having as first term
$P^{0}$ are gauge equivalent to this first term. 
\end{defn}

In term of the induced representations, $P^{0}$ being rigid means
that all the representations obtained from Maurer-Cartan elements
of the form (\ref{eq:MC_element}) are equivalent as representations
to the representation
\[
T_{g}^{P^{0}}\psi(x)=P^{0}(x)\psi(\varphi_{g}^{-1}(x)).
\]

The following theorem gives cohomological conditions answering the
questions mentioned above.

\begin{thm}
\label{thm:ExistenceAndRigidity}Let $P^{0}\in\pol_{d}^{\bullet}(0)$
be a Maurer-Cartan element and $P^{1}\in\pol_{d}^{1}(1)$ a one cocycle
(i.e $d_{P^{0}}P^{1}=0$). If 
\[
H^{2}(\pol_{d}^{\bullet}(n),d_{P^{0}})=0,\quad n\geq2,
\]
then there exists a Maurer-Cartan element $\omega$ in $\mathcal{P}_{\varphi}^{\bullet}$
such that 
\[
\omega=P^{0}+\hbar P^{1}+\mathcal{O}(\hbar^{2}).
\]
Moreover if 
\[
H^{1}(\pol_{d}^{\bullet}(n),d_{P^{0}})=0,\quad n\geq1,
\]
the Maurer-Cartan element $P^{0}$ is rigid. 
\end{thm}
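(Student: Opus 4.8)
The plan is to prove this by the standard obstruction-theoretic argument for Maurer-Cartan elements in a complete DGA, building the element $\omega$ order by order in $\hbar$. We are given $P^0$ a Maurer-Cartan element (so the twisted differential $d_{P^0}$ is a genuine differential by Proposition \ref{prop:first_and_second_terms}) and $P^1$ a $d_{P^0}$-cocycle in $\pol_d^{\bullet}(1)$. We seek $P^2, P^3, \dots$ with $P^n \in \pol_d^{\bullet}(n)$ so that $\omega = \sum_{n\geq 0}\hbar^n P^n$ satisfies the full Maurer-Cartan equation. By Proposition \ref{prop:first_and_second_terms}, the order-$n$ component of the Maurer-Cartan equation is exactly the recursive relation
\[
d_{P^0}P^n = -\sum_{\substack{i+j=n\\ i,j\geq 1}} P^i \star P^j,
\]
so the construction is purely inductive: the right-hand side at order $n$ involves only the already-constructed terms $P^1,\dots,P^{n-1}$.

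First I would set up the induction. Suppose $P^1,\dots,P^{n-1}$ have been found so that the Maurer-Cartan equation holds modulo $\hbar^n$. I must then find $P^n \in \pol_d^{\bullet}(n)$ solving the displayed equation. The right-hand side, call it $\Omega^n := -\sum_{i+j=n} P^i \star P^j$, lands in $\pol_d^{\bullet}(n)$ because the product respects the filtration ($\star\colon F^i \times F^j \to F^{i+j}$), and $P^n$ exists precisely when $\Omega^n$ is a $d_{P^0}$-coboundary. The hypothesis $H^2(\pol_d^{\bullet}(n),d_{P^0})=0$ for $n\geq 2$ provides the solvability: it suffices to check that $\Omega^n$ is a $d_{P^0}$-cocycle of degree $2$, since then vanishing cohomology forces it to be a coboundary, yielding $P^n$ (of degree $1$, as required for a $G$-system living in $\mathcal{P}_{\varphi}^1$).

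The main obstacle, and the one genuinely technical step, is verifying that $\Omega^n$ is $d_{P^0}$-closed. This is the standard ``the obstruction is automatically a cocycle'' lemma, and it follows from the Bianchi-type identity built from the graded Leibniz rule for $d_{P^0}$ together with associativity of $\star$ and the lower-order Maurer-Cartan relations. Concretely, I would compute $d_{P^0}\Omega^n = -\sum_{i+j=n} \bigl( (d_{P^0}P^i)\star P^j + (-1)^{|P^i|} P^i \star (d_{P^0}P^j)\bigr)$, substitute the inductive hypothesis $d_{P^0}P^i = -\sum P^{i'}\star P^{i''}$ for each factor, and watch the terms cancel in pairs by associativity and the graded signs. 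Because every $P^i$ has degree $1$, the signs are uniform and the bookkeeping is manageable; one also uses $d_{P^0}^2=0$ and that $P^0$ satisfies its own Maurer-Cartan equation. The completeness of $\mathcal{P}_{\varphi}^{\bullet}$ guarantees the resulting formal series $\omega$ is a well-defined element.

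For the rigidity statement, I would argue analogously but with gauge transformations. Suppose $\omega = P^0 + \hbar P^1 + \cdots$ is any Maurer-Cartan element with first term $P^0$; I want an invertible $u = 1 + \hbar u^1 + \hbar^2 u^2 + \cdots \in \mathcal{P}_{\varphi}^0$ gauging $\omega$ to $P^0$, built order by order. At each stage the gauge-equivalence condition produces, at order $\hbar^n$, an equation of the form $d_{P^0}u^n = (\text{expression in }P^1,\dots,P^n\text{ and }u^1,\dots,u^{n-1})$, where the right-hand side is a $d_{P^0}$-cocycle in $\pol_d^{\bullet}(n)$ of degree $0$ whose solvability requires that its class in $H^1(\pol_d^{\bullet}(n),d_{P^0})$ vanish. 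Since $d_{P^0}$ raises cochain degree, the $1$-cochain $P^n$ being a candidate coboundary is controlled by $H^1$, so the hypothesis $H^1(\pol_d^{\bullet}(n),d_{P^0})=0$ for $n\geq 1$ lets us solve for $u^n$ at every order; completeness again assembles the $u^n$ into a genuine gauge transformation, proving $\omega$ is gauge equivalent to $P^0$. The crux here, as in the existence half, is confirming that the order-$n$ obstruction is a cocycle so that the vanishing-cohomology hypothesis applies.
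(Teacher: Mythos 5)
Your argument is correct, but it takes a more self-contained route than the paper. The paper's proof is essentially a citation: it observes that $\mathcal{P}_{\varphi}^{\bullet}$ is a complete DGA with respect to the filtration $F^{k}\mathcal{P}_{\varphi}^{\bullet}$, notes that $\gamma=P^{0}+\hbar P^{1}$ is Maurer--Cartan modulo $F^{2}$, identifies the associated graded complexes $F^{n}/F^{n+1}$ with $(\pol_{d}^{\bullet}(n),d_{P^{0}})$ (the twist $d_{\gamma}$ reducing to $d_{P^{0}}$ on the quotient since $P^{1}$ carries a power of $\hbar$), and then invokes Propositions A.3 and A.6 of \cite{ACD} for existence and rigidity respectively. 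What you have done is inline the inductive obstruction-theoretic argument that underlies those general propositions: order-by-order solution of the recursion (\ref{eq:recursive_MC}) of Proposition \ref{prop:first_and_second_terms}, with the degree-$2$ obstruction $\Omega^{n}=-\sum_{i+j=n,\,i,j\geq1}P^{i}\star P^{j}$ killed by $H^{2}(\pol_{d}^{\bullet}(n),d_{P^{0}})=0$, and the analogous gauge recursion for rigidity. The paper's approach buys brevity and places the theorem inside a general deformation-theoretic framework; yours makes the mechanism visible and verifiable without consulting \cite{ACD}, at the price of having to check the ``obstruction is a cocycle'' lemma by hand, which you only sketch for the rigidity half (it does go through: since degree-zero cochains in $\mathcal{P}_{\varphi}^{\bullet}$ are automatically closed, the gauge condition $\omega\star u=u\star P^{0}$ gives at order $n$
\[
d_{P^{0}}u^{n}=-P^{n}-\sum_{i=1}^{n-1}P^{i}\star u^{n-i},
\]
and closedness of the right-hand side follows inductively from the Maurer--Cartan equation for $\omega$ and the lower-order gauge relations). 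Two small points: the right-hand side above is a cochain of degree $1$, not ``degree $0$'' as you wrote --- your subsequent appeal to $H^{1}(\pol_{d}^{\bullet}(n),d_{P^{0}})$ is the correct one, so this is only a slip; and it is worth stating explicitly, as the paper does, that $d_{P^{0}}$ preserves each summand $\pol_{d}^{\bullet}(n)$ because $P^{0}\in\pol_{d}^{\bullet}(0)$ and the product is $\hbar$-homogeneous, which is what makes your level-by-level use of the cohomological hypotheses legitimate.
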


\begin{proof}
The proof relies on Proposition A.3 and A.6 of the Appendix A of \cite{ACD}.
Since $\gamma=P^{0}+\hbar P^{1}$ is a Maurer-Cartan element modulo
$\mathcal{F}^{2}\mathcal{P}_{\varphi}^{\bullet}$, Proposition A.3
tells us that there exist a Maurer-Cartan element $\omega=P^{0}+\hbar P^{1}+\mathcal{O}(\hbar^{2})$
provided
\[
H^{2}(\mathcal{F}^{n}\mathcal{P}_{\varphi}^{\bullet}/\mathcal{F}^{n+1}\mathcal{P}_{\varphi}^{\bullet},\, d_{\gamma})=0,\quad n\geq2,
\]
where $d_{\gamma}$ is the operator $d_{\gamma}a=da+[\gamma,a]$,
which becomes a differential on the quotient $\mathcal{F}^{n}\mathcal{P}_{\varphi}^{\bullet}/\mathcal{F}^{n+1}\mathcal{P}_{\varphi}^{\bullet}$.
The first part of the theorem follows from (\ref{eq:identification})
and the fact that $d_{\gamma}$ becomes $d_{P^{0}}$ when passing
to the quotient (because $P^{1}$ has one power of $\hbar$, which
will make this term disappear in the quotient). The rigidity part
of the theorem is a direct application of Proposition A.6 with the
same observations as above.
\end{proof}

\subsection{Trivial action}

Consider the case when group action $G$ is trivial $\varphi_{g}=\id$
as well as the first term of the deformation, that is we are looking
at $G$-systems of the form
\[
a_{g}=1+\hbar P_{g}^{1}+\hbar^{2}P_{g}^{2}+\cdots
\]
The corresponding operators implementing the representation are then
deformations of the trivial representations of $G$ in $C^{\infty}(\R^{d})[[\hbar]]$;
they are of the form 
\begin{equation}
T_{g}^{a}\psi(x)=\id+\sum_{n\geq1}\hbar^{n}P_{g}^{n}\left(x,D\right),\label{eq:def_of_trivial_rep}
\end{equation}
where $P_{g}^{n}(x,D)$ is a differential operator of order $n$ with
nonconstant bounded coefficients. The following theorem gives a simplification
of the existence and rigidity result for general deformations. This
result is very close to that of Pinzcon \cite{Pinzcon} on obstructions
and rigidity of deformations of representations. 

\begin{thm}
\label{thm:trivial-action}Consider the cohomology $H^{\bullet}(G,C^{\infty}(\R^{d}))$
of $G$ with coefficients in the smooth functions on $\R^{d}$, which
we consider as a trivial $G$-bimodule. If $H^{2}(G,C^{\infty}(\R^{d}))=0$
then there exists representation of $G$ into $C^{\infty}[[\hbar]]$
of the form (\ref{eq:def_of_trivial_rep}). Moreover, if $H^{1}(G,C^{\infty}(\R^{d}))=0$,
all these representations are equivalent. 
\end{thm}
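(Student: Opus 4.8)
The plan is to derive the statement from Theorem~\ref{thm:ExistenceAndRigidity} by identifying the twisted complexes $(\pol_d^\bullet(n),d_{P^0})$ explicitly in the present situation, where $\varphi_g=\id$ for all $g$ and the unperturbed term is $P^0=1$. The first observation is that when $\varphi_1=\varphi_2=\id$ the phase in the amplitude product (\ref{f111}) collapses to $\langle\bar\xi-\xi,x-\bar x\rangle$, so that $\tilde\star$ reduces to the standard product $\spm$ of (\ref{f5}); in particular the constant amplitude $1$ is a two-sided unit, $1\spm K=K\spm 1=K$, to all orders in $\hbar$ (the Moyal corrections carry a factor $\partial_\xi^\alpha 1$ or $\partial_x^\alpha 1$ with $|\alpha|>0$ and hence vanish), and the constant cochain $P^0=1$ is readily checked to be the Maurer-Cartan element attached to the trivial quantization.

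I would then compute the twisted differential of (\ref{eq:d_S}). For a $k$-cochain $a$ one has $d_{P^0}a=da+P^0\star a-(-1)^k a\star P^0$; since the graded product (\ref{f121}) reindexes the group variables while multiplication by $1$ leaves amplitudes untouched, this becomes
\[
(d_{P^0}a)_{g_1,\dots,g_{k+1}}=a_{g_2,\dots,g_{k+1}}+\sum_{i=1}^{k}(-1)^i a_{g_1,\dots,g_i g_{i+1},\dots,g_{k+1}}+(-1)^{k+1}a_{g_1,\dots,g_k},
\]
which is exactly the inhomogeneous group-cohomology differential with \emph{trivial} $G$-action on the coefficient space $\pol_d(n)$. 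Hence $d_{P^0}$ preserves the $\hbar$-grading and restricts on each $\pol_d^\bullet(n)$ to this bar differential, giving $H^\bullet(\pol_d^\bullet(n),d_{P^0})=H^\bullet(G,\pol_d(n))$.

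It remains to match these groups with the ones in the hypothesis. As a trivial $G$-module, $\pol_d(n)=\bigoplus_{|\alpha|\le n}C^\infty(\R^d)\,\xi^\alpha$ is a finite direct sum of copies of the trivial module $C^\infty(\R^d)$, and $C^\bullet(G,-)$ commutes with finite direct sums, so $H^i(G,\pol_d(n))\cong\bigoplus_{|\alpha|\le n}H^i(G,C^\infty(\R^d))$. Consequently $H^2(G,C^\infty(\R^d))=0$ forces $H^2(\pol_d^\bullet(n),d_{P^0})=0$ for every $n$ (in particular $n\ge2$), and $H^1(G,C^\infty(\R^d))=0$ forces $H^1(\pol_d^\bullet(n),d_{P^0})=0$ for every $n\ge1$. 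Feeding these vanishings into Theorem~\ref{thm:ExistenceAndRigidity} produces, for any prescribed first-order cocycle $P^1$, a Maurer-Cartan extension $\omega=1+\hbar P^1+\mathcal{O}(\hbar^2)$ and hence a representation of the form (\ref{eq:def_of_trivial_rep}), and it also yields the rigidity of $P^0=1$, that is, the equivalence of all such representations.

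The one genuinely delicate point is the middle step: one must confirm that the higher-$\hbar$ Moyal corrections in $P^0\star a$ and $a\star P^0$ drop out precisely because $P^0=1$ is the exact unit, so that $d_{P^0}$ is \emph{exactly} the trivial-coefficient bar differential and no twisting survives on the associated graded. Everything else is filtration bookkeeping together with a direct appeal to the main theorem.
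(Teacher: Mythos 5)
Your proposal is correct and follows essentially the same route as the paper's own proof: reduce $\tilde\star$ to the standard product under the trivial action, check that $d_{P^0}$ with $P^0=1$ is exactly the group-cohomology differential with trivial coefficients acting on each coefficient $f^\alpha$ of $\xi^\alpha$ (so the complex splits as finitely many copies of $C^\bullet(G,C^\infty(\R^d))$), and then invoke Theorem \ref{thm:ExistenceAndRigidity}. Your explicit verification that $1$ is a two-sided unit for the standard product (so no Moyal-type corrections survive) is a detail the paper leaves implicit but is the right point to pin down.
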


\begin{proof}
As a graded vector space $\pol_{d}^{\bullet}(n)$ can be identified
with the following direct sum with $n$-terms 
\[
C^{\bullet}(G,C^{\infty}(\R^{d}))\oplus\cdots\oplus C^{\bullet}(G,C^{\infty}(\R^{d})),
\]
since, for a cochain $P=\sum_{|\alpha|\leq n}f^{\alpha}(x)\xi^{\alpha}$,
we have that $f^{\alpha}\in C^{\bullet}(G,C^{\infty}(\R^{d}))$ for
all multi-indices $\alpha$. Using Theorem \ref{thm:ExistenceAndRigidity},
we only need to show that, in the case the action is trivial, $d_{1}$
respects this splitting. Let us compute the differential of $P\in\pol_{d}^{k}(n)$:
\begin{eqnarray*}
(d_{1}P)_{g_{1},\dots,g_{k+1}} & = & (dP)_{g_{1},\dots,g_{k+1}}+1_{g_{1}}\star P_{g_{2},\dots,g_{k+1}}-(-1)^{k}P_{g_{1},\dots,g_{k}}\star1_{g_{k+1}},\\
 & = & P_{g_{2},\dots,g_{k+1}}+\sum_{i=1}(-1)^{i}P_{g_{1},\dots,g_{i}g_{i+1},\dots,g_{k+1}}+(-1)^{k+1}P_{g_{1},\dots,g_{k}},
\end{eqnarray*}
since the product $\star$ is now the standard product (associated
with the standard quantization) because the action is trivial. Since
only the $f^{\alpha}$'s depend on the group variables, we obtain
that
\[
d_{1}P=\sum_{|\alpha|\leq n}(\tilde{\delta}f^{\alpha})\xi^{\alpha},
\]
where $\tilde{\delta}$ is the differential of the group cohomology
of $G$ in $S_{d}(1)$ considered as a trivial bimodule. 
\end{proof}

\end{document}